\begin{document}
\title{Inventory Routing Problem with Facility Location \thanks{This material is based upon research supported in part by the U. S. Office of Naval Research under award number N00014-18-1-2099, and the U. S. National Science Foundation under award number CCF-1527032.}}
%
%
\author{Yang Jiao\inst{1}\orcidID{0000-0001-9583-0784} \and
R. Ravi\inst{1}\orcidID{0000-0001-7603-1207}
}
\authorrunning{Y. Jiao and R. Ravi}
%
\institute{
Tepper School of Business, Carnegie Mellon University, 5000 Forbes Ave, Pittsburgh, PA 15213, USA\\
\email{\{yangjiao,ravi\}@andrew.cmu.edu}
}
\maketitle              
\begin{abstract}
We study problems that integrate depot location decisions along with the inventory routing problem of serving clients from these depots over time balancing the costs of routing vehicles from the depots with the holding costs of demand delivered before they are due. Since the inventory routing problem is already complex, we study the version that assumes that the daily vehicle routes are direct connections from the depot thus forming stars as solutions, and call this problem
	the Star Inventory Routing Problem with Facility Location (SIRPFL). As a stepping stone to solving SIRPFL, we first study the Inventory Access Problem (IAP), which is the single depot, single client special case of IRP. The Uncapacitated IAP is known to have a polynomial time dynamic program. We provide an NP-hardness reduction for Capacitated IAP where each demand cannot be split among different trips. We give a $3$-approximation for the case when demands can be split and a $6$-approximation for the unsplittable case. 
	For Uncapacitated SIRPFL, we provide a $12$-approximation by rounding an LP relaxation. Combining the ideas from Capacitated IAP and Uncapacitated SIRPFL, we obtain a $24$-approximation for Capacitated Splittable SIRPFL and a $48$-approximation for the most general version, the Capacitated Unsplittable SIRPFL.

\keywords{Inventory Routing Problem \and Facility Location \and Approximation Algorithms.}
\end{abstract}
\section{Introduction}
We initiate the integrated study of facility opening and inventory routing problems. Facility location has many applications such as the placement of factories, warehouses, service centers, etc. The facility location problem involves selecting a subset of locations to open facilities to serve demands minimizing the facility opening costs plus connection costs between demands and the opened locations. Inventory routing arises from Vendor Managed Inventory systems in which a product supplier and its retailers cooperate in the inventory planning. First, the retailers share with the supplier the demand patterns for its product and the storage costs for keeping early deliveries per retailer location. Then the supplier is responsible for planning a delivery schedule that serves all the demands on time. The inventory routing problem (IRP) trades off visits from fixed depots over a planning horizon to satisfy deterministic daily demands at clients to minimize routing costs plus the holding costs of demand delivered before they are due at clients. We integrate the decision of which depots to open in the problem and study the joint problem of opening depots (given opening costs), and using these depots to minimize the total inventory routing costs, i.e. the sum of the routing costs from these depots and the holding costs at clients.

The IRP has been challenging to study by itself from an approximation perspective: constant-factor approximations are known only in very structured metrics like trees~\cite{Cheungetal2016} or when the routes are periodic~\cite{FNR14}. Hence we simplify the routing considerably to gain a better understanding of the integrated problem. In particular, we assume that the visits from each client go to the closest opened depots via a direct edge: the routing solution is thus a collection of stars rather than the Steiner trees or tours considered in the original IRP. We call this simplified variant of IRP the Star IRP or SIRP for short.

\subsection{Problem Definitions}
The \emph{Star Inventory Routing Problem with Facility Location (SIRPFL)} is inventory routing with the extra choice to build depots at a subset of the locations for additional costs before the first day, which then can be used to route deliveries throughout the entire time horizon. Formally, we are given an undirected graph $G=(V,E)$ with edge weights $w_e$, a time horizon $1,\ldots,T$, a set $D$ of demand points $(v,t)$ with $d^v_t$ units of demand due by day $t$, facility opening costs $f_v$ for vertex $v$, holding costs $h^v_{s,t}$ per unit of demand delivered on day $s$ serving $(v,t)$. The objective is to open a set $F \in V$ of facilities that can be used throughout the entire time horizon, determine the set of demands to serve/visit per day, and connect any visited clients to opened facilities per day so that the total cost from facility openings, client-facility connections, and storage costs for early deliveries is minimized. Three natural variants of the problem arise based on whether the delivery vehicles are uncapacitated, and if not, whether or not any single day's demand can be split among different visits. We call the first variant the {\em Uncapacitated} version. For the {\em Capacitated} version, we assume all vehicles have a fixed capacity $U$ and arrive at two variants: {\em Unsplittable} where every daily demand is satisfied wholly in one visit and the {\em Splittable} where it can be split across multiple visits (even across multiple days). We assume that any single demand never exceeds the capacity of the vehicle so that the splittable problem is always feasible.

Once the facility decisions are fixed, the resulting SIRP instances can be decomposed across the clients due to the assumption that the visits are direct edges from the client to an open facility. Thus, for each client, the routing solution is the direct edge to the closest open facility, and the only decisions are the delivery days and in the capacitated case, the number of trips on such days. We call the single-depot single-client problem the {\em Inventory Access Problem (IAP)}. Even the simple IAP has the three variants alluded to above. 

\subsection{Contributions}

\begin{enumerate}
  \item We initiate the study of inventory routing problems integrated with Facility Location (IRPFL) and supply the first complexity and approximation results.
  \item For the simpler Inventory Access Problem, we show that the unsplittable capacitated case is already weakly NP-hard. The uncapacitated problem is a single-item lot-sizing problem, for which a polynomial time exact solution exists~\cite{WW58}. For the latter and its splittable counterpart, we give constant approximation algorithms using LP rounding.
  \item For the Star versions of the IRPFL we consider, we give constant-factor approximation for all three versions by deterministically rounding new linear programming relaxations for the problems. The table below summarizes the approximation guarantees.\\
  
  \begin{tabular}{ l | l | l }
     & IAP & SIRPFL \\ \hline
    Uncapacitated & polynomial time~\cite{WW58} & 12-approx \\ \hline
    Capacitated Splittable & 3-approx & 24-approx \\ \hline
     Capacitated Unsplittable & NP-hard, 6-approx & 48-approx 
  \end{tabular}

\vspace{.5cm}

  \item Our algorithms need to modify and adapt current facility location LP rounding methods, since none of the variables in the objective function can directly be used for the rounding methods. These methods may be useful in future work involving time-indexed formulations integrating network design and facility location.
\end{enumerate}

We review related work in the next section. We then present a complete description of our 12-approximation for the Uncapacitated SIRPFL in Section~\ref{sect:USIRPFL}. To handle the capacitated versions, we need to strengthen the LP relaxation to better bound multiple visits per day: we illustrate this using the simpler example of the Capacitated Splittable IAP by providing a 3-approximation in Section~\ref{sect:CIAP}. Building on the 3-approximation, we show a 6-approximation for the Capacitated Unsplittable IAP. Finally, Section~\ref{sect:conclusion} summarizes the results and open problems. Details of the other results are in the Appendix.


\section{Related Work}

{\bf UFL:} The first constant approximation for Uncapacitated Facility Location was a $3.16$-approximation by Shmoys et el.~\cite{STA97} using the filtering method of Lin and Vitter~\cite{LV92}. Various LP-based methods made further improvements~\cite{JV01,JMMSV03,CS04}. More recently, Li gave a $1.488$-approximation~\cite{Li2013}.

{\bf IRP:} \sloppy Without facility opening decisions, IRP itself on general metrics has an $O(\frac{\log T}{\log \log T})$-approximation by Nagarajan and Shi~\cite{NS16} and an $O(\log N)$-approximation by Fukunaga et al~\cite{FNR14}. For variants of periodic IRP, Fukunaga et al.~\cite{FNR14} provide constant approximations. IRP on tree metrics have a constant approximation~\cite{CELS16}. Another special case of IRP is the joint replenishment problem (JRP), which has also been extensively studied~\cite{AJR89,LRS06,LRSS08,NS09,BBCJNS14}.

{\bf TreeIRPFL:} Another related problem is the \emph{Tree IRPFL}, which has the same requirements except that the connected components for the daily visits are trees (instead of tours in the regular IRP, or stars in the Star IRP version we study). Tree IRPFL differs from Star IRPFL by allowing savings in connection costs by connecting clients through various other clients who are connected to an opened facility. 

Single-day variants of Tree IRPFL have been studied extensively. In these problems, there is no holding cost component and thus they trade off the facility location placements with the routing costs from these facilities. 
We use $\rho_{\Pi}$ to denote the best existing approximation ratio for problem $\Pi$.
For uncapacitated single-day Tree IRPFL, the problem can directly be modeled as a single Steiner tree problem: attach a new root node with edges to each facility of cost equal to its opening cost; finding a Steiner tree from this root to all the clients gives the required solution. Thus, this problem has a $\rho_{\text{ST}}$-approximation algorithm.
If clients are given in groups such that only one client per group needs to be served, Glicksman and Penn~\cite{GP08} generalize the Steiner tree approximation method of Goemans and Williamson~\cite{GW95} to $(2-\frac{1}{|V|-1})L$-approximation, where $L$ is the largest size of a group.
For the capacitated single-day case of Tree IRPFL, Harks et al.~\cite{HKM13} provide a 4.38-approximation. They also give constant approximations for the prize collecting variant and a cross-docking variant. For the group version of the problem, Harks and K\"{o}nig show a $4.38L$-approximation.

{\bf Integrated Logistics:} Ravi and Sinha~\cite{RS06} originated the study of more general integrated logistics problems, and give a $(\rho_{\text{ST}} + \rho_{\text{UFL}})$-approximation for a generalization of the capacitated single-day Tree IRPFL called Capacitated-Cable Facility Location (CCFL). Here, ST stands for Steiner Tree and UFL stands for Uncapacitated Facility Location. In CCFL, the amount of demand delivered through each edge must be supported by building enough copies of cables on the edge. They give a bicriteria $(\rho_{k-\text{MEDIAN}} + 2)$-approximation opening $2k$ depots for the $k$-median version of the CCFL, which allows $k$ depots to be located at no cost.

\section{Uncapacitated SIRPFL}
\label{sect:USIRPFL}
In this section, we give a constant approximation for Uncapacitated SIRPFL. First, we state the LP formulation for Uncapacitated SIRPFL. Let $z_v$ indicate whether a facility at $v$ is opened, $y^{uv}_s$ indicate whether edge $uv$ is built on day $s$, $y^{uv}_{st}$ indicate whether to deliver the demand of $(v,t)$ on day $s$ from facility $u$, and $x^v_{s,t}$ indicate whether demand point $(v,t)$ is served on day $s$. Then Uncapacitated SIRPFL has the following LP relaxation. To simplify notation, define $H^v_{s,t} = d^v_t h^v_{s,t}$, i.e., $H^v_{s,t}$ is holding cost of storing all of the demand for demand point $(v,t)$ from day $s$ to day $t$.

\begin{alignat}{2}
\min \qquad \sum_{v \in V} f_v z_v + &\sum_{s \leq T} \sum_{e \in E} w_e y^e_s + && \sum_{(v,t) \in D} \sum_{s \leq t} H^v_{s,t} x^v_{s,t} \nonumber \\
\text{s.t.} \hspace{1.4cm} \sum_{s \leq t} x^v_{s,t} &\geq 1 && \forall (v,t) \in D \label{constraint:service}\\
\sum_{u \in V} y^{uv}_{st} &\geq x^v_{s,t} && \forall (v,t) \in D, s \leq t \label{constraint:connection}\\
z_u &\geq \sum_{s=1}^T y^{uv}_{st} &&\forall (v,t) \in D, u \in V \label{constraint:facilityLB}\\
y^{uv}_s &\geq y^{uv}_{st} &&\forall (v,t) \in D, u \in V, s \leq t \label{constraint:connectionLB}\\
z_u \geq &y^{uv}_s &&\forall u,v \in V, s \leq T \label{constraint:facilityLB2}\\
\sum_{u \in V} \sum_{s = s'}^{t_2} y^{uv}_{s t_2} &\geq \sum_{u \in V} \sum_{s = s'}^{t_2} y^{uv}_{s t_1} &&\forall v \in V, t_2 > t_1 \geq s' \label{constraint:serviceIntervals}\\
z_u, y^e_r, y^a_{l,m} x^v_{s,t} &\geq 0 && \forall u, v \in V, e,a \in E, r, m, t \leq T, l \leq m, s \leq t \label{constraint:nonnegAll}.
\end{alignat}

Constraint~\ref{constraint:service} requires that every demand point is served by its deadline. Constraint~\ref{constraint:connection} enforces that $v$ gets connected to some facility on day $s$ if $(v,t)$ is served on day $s$. Constraint~\ref{constraint:facilityLB} ensures that facility $u$ is open if $u$ is assigned to any demand point over the time horizon. 
	Constraint~\ref{constraint:connectionLB} ensures that whenever $(v,t)$ is served on day $s$ from $u$, an edge between $u$ and $v$ must be built on day $s$. Constraint~\ref{constraint:facilityLB2} ensures that whenever some client $v$ is connected to $u$ on some day $s$, a facility must be built at $u$. Constraint~\ref{constraint:serviceIntervals} is valid for optimal solutions since for any $v$, if there is a service to $(v,t_1)$ within $[s',t_1]$ and $t_1 < t_2$, then the service to $t_2$ is either on the same day or later, i.e., there must be a service to $(v,t_2)$ within $[s',t_2]$. Here we are using the property that in an optimal solution the demands from a client over time are served in order without loss of generality, which is a consequence of the monotonicity of the unit holding costs at any location.

Using the above LP formulation, we provide an LP rounding algorithm. Before stating the algorithm, we define the necessary notation. First, let $(x,y,z)$ be an optimal LP solution. Let $f(x,y,z)$, $r(x,y,z)$, and $h(x,y,z)$ denote the facility cost, routing cost, and holding cost of $(x,y,z)$ respectively. Define $s_{v,t}$ to be the latest day $s^*$ such that $\sum_{u \in V} \sum_{s=s^*}^{t} y^{uv}_{st} \geq \frac{1}{2}$. 

The key idea is to apportion the visit variable $y_{uv}^s$ at day $s$ to different demand days $t$ that it serves using the additional variable $y_{uv}^{st}$. The latter variables for any demand at node $v$ on day $t$ provide a stronger lower bound, via Constraint~\ref{constraint:facilityLB}, on how much facility must be installed at node $u$ than any lower bound from $y^{uv}_s$ alone. Constraint~\ref{constraint:facilityLB} is a crucial component in the proof of Lemma~\ref{lem:concentration}, which ultimately allows us to bound the facility cost.

Ideally, we would like to use $s_{v,t}$ to bound the holding cost incurred when serving $(v,t)$ on day $s_{v,t}$. However, to avoid high routing costs, not all demands will get to be served by the desired $s_{v,t}$. Instead, for each client $v$, an appropriately chosen subset of $\{s_{v,t}: t \leq T\}$ will be selected to be the days that have service to $v$. To determine facility openings and client-facility connections, the idea is to pick ``balls" that gather enough density of $z_u$ values so that the cheapest facility within it can be paid for by the facility cost part of the LP objective. To be able to bound the routing cost, we would like to pick the radii of the balls based on the amount of $y^{uv}_s$ values available from the LP solution. However, $y^{uv}_s$ by itself does not give a good enough lower bound for $z_u$. So we will carefully assign disjoint portions of $y^{uv}_s$ to $y^{uv}_{st}$ for different $t$'s. In this way, we use $y^{uv}_{st}$ to bound the facility cost, and the disjoint portions of $y^{uv}_s$ to pay for the routing cost. With these goals in mind, we now formally define the visit days and the radius for each client.

Fix a client $v$. The set $A_v$ of demand days $t$ that $v$ gets visited on their $s_{v,t}$ will be assigned based on collecting enough $y^{uv}_{st}$ over $u$ and $s$. We call the days in $A_v$ \emph{anchors} of $v$. Denote by $t_{L_v}$ the latest day that has positive demand at $v$. We use $S_v$ to keep track of the service days for the anchors.

\begin{algorithm}
\caption{Visits for $v$}
\label{algor:visits}
\begin{algorithmic}[1]
\State Initialize $A_v \leftarrow \{t_{L_v}\}$.
\State Initialize $S_v \leftarrow \{s_{v,t_{L_v}}\}$.
\State Denote by $\tilde{t}$ the earliest anchor in $A_v$.
\While{there is a positive unserved demand at $v$ on some day before $\tilde{t}$}
	\State Denote by $t$ the latest day before $\tilde{t}$ with positive demand at $(v,t)$.
	\If{$t \geq s_{v,\tilde{t}}$}
		\State Serve $(v,t)$ on day $s_{v,\tilde{t}}$.	
	\Else
		\State Update $A_v \leftarrow A_v \cup \{t\}$.
		\State Update $S_v \leftarrow S_v \cup \{s_{v,t}\}$.
		\State Update $\tilde{t} \leftarrow t$.
	\EndIf				
\EndWhile
\State Output the visit set $S_v$ for $v$.
\end{algorithmic}
\end{algorithm}

Define $W_{v,t} = \sum_{u \in V} \sum_{s = s_{v,t}}^t w_{uv} y^{uv}_{st}$. Let $W_v = \min_{t \in A_v} W_{v,t}$. Finally, define $B_v = \{u \in V: w_{uv} \leq 4 W_v\}$, which is a ball of radius $4 W_v$ centered at $v$. For ball $B_v$, let $F_v = \arg\min_{q \in B_v} f_q$. Simply, $F_v$ is a location in $B_{v}$ with the lowest facility cost. Now we are ready to state the algorithm for opening facilities in Algorithm~\ref{algor:starsIRPFL}.

\begin{algorithm}
\caption{12-approximation for Uncapacitated SIRPFL}
\label{algor:starsIRPFL}
\begin{algorithmic}[1]
\State $\mathcal{B} \leftarrow \emptyset$
\While{there is any ball $B_v$ disjoint from all balls in $\mathcal{B}$}
\State Add to $\mathcal{B}$ the ball $B_{v_i}$ of smallest radius
\EndWhile
\State Within each ball $B_{v_i}$, open a facility at $F_{v_i}$.
\State Assign each client $v$ to the closest opened facility $u(v)$.
\State For each $v$, serve it on all days in $S_v$ by building an edge from facility $u(v)$ to $v$ per day $s \in S_v$.
\end{algorithmic}
\end{algorithm}

\noindent Denote by $B_{v_1},\ldots,B_{v_l}$ the balls picked into $\mathcal{B}$ by Algorithm~\ref{algor:starsIRPFL}.

\begin{proposition}
The holding cost of the solution from the algorithm is at most $2h(x,y,z)$.
\end{proposition}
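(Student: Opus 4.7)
The plan is to compare the algorithm's holding cost to the LP's holding cost one demand point at a time. For each $(v,t) \in D$, the algorithm pays exactly $H^v_{s^*,t}$ for some service day $s^* \leq t$ determined by Algorithm~\ref{algor:visits}, and I aim to charge this against $2\sum_{s \leq t} H^v_{s,t} x^v_{s,t}$, the LP's contribution to $(v,t)$. Summing over $(v,t) \in D$ then yields the claimed $2h(x,y,z)$ bound.

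The central quantitative step is to show $\sum_{s \leq s_{v,t}} x^v_{s,t} > \tfrac{1}{2}$ for every $(v,t) \in D$. Maximality in the definition of $s_{v,t}$ gives $\sum_{u} \sum_{s = s_{v,t}+1}^{t} y^{uv}_{st} < \tfrac{1}{2}$; combining this with Constraint~\ref{constraint:connection} gives $\sum_{s = s_{v,t}+1}^{t} x^v_{s,t} < \tfrac{1}{2}$, and Constraint~\ref{constraint:service} then yields $\sum_{s \leq s_{v,t}} x^v_{s,t} > \tfrac{1}{2}$. Because $H^v_{s,t}$ is non-increasing in $s$ (the same monotonicity the paper already invokes to justify Constraint~\ref{constraint:serviceIntervals}), the LP contribution to $(v,t)$ is at least $H^v_{s_{v,t},t} \sum_{s \leq s_{v,t}} x^v_{s,t} > \tfrac{1}{2} H^v_{s_{v,t},t}$. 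This immediately settles every anchor $t \in A_v$, since Algorithm~\ref{algor:visits} serves an anchor precisely on day $s_{v,t}$.

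For a non-anchor $t$, the algorithm serves $(v,t)$ on $s_{v,\tilde{t}}$ where $\tilde{t}$ is its covering anchor with $s_{v,\tilde{t}} \leq t < \tilde{t}$. If $t = s_{v,\tilde{t}}$ the holding cost is zero and nothing needs to be done. Otherwise I would apply Constraint~\ref{constraint:serviceIntervals} with $s' = s_{v,\tilde{t}}+1$, $t_1 = t$, $t_2 = \tilde{t}$ to transfer the ``$< \tfrac{1}{2}$'' bound from $\sum_u \sum_{s=s_{v,\tilde{t}}+1}^{\tilde{t}} y^{uv}_{s\tilde{t}}$ to $\sum_u \sum_{s=s_{v,\tilde{t}}+1}^{t} y^{uv}_{st}$ (using that $y^{uv}_{st}$ vanishes for $s > t$); the previous paragraph's argument then applies verbatim to $(v,t)$, giving $\sum_{s \leq s_{v,\tilde{t}}} x^v_{s,t} > \tfrac{1}{2}$ and hence $\sum_{s \leq t} H^v_{s,t} x^v_{s,t} > \tfrac{1}{2} H^v_{s_{v,\tilde{t}},t}$.

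The main obstacle, and the reason Constraint~\ref{constraint:serviceIntervals} is built into the LP, is exactly the non-anchor case. The definition of $s_{v,\tilde{t}}$ only controls the LP's service profile for $(v,\tilde{t})$, whereas the algorithm uses day $s_{v,\tilde{t}}$ to serve the earlier demand $t$. Without a cross-deadline comparison of service levels there is no route from information about $(v,\tilde{t})$ to a lower bound on the LP's holding cost for $(v,t)$; Constraint~\ref{constraint:serviceIntervals} supplies exactly this bridge, and the rest of the proof is bookkeeping on top of the monotonicity of $H^v_{s,t}$ in $s$.
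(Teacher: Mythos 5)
Your proof is correct and follows essentially the same route as the paper: the same per-demand charging of $2\sum_{s\le t}H^v_{s,t}x^v_{s,t}$, the same anchor/non-anchor case split, and the same use of the maximality of the half-mass day $s_{v,t}$ together with Constraints~\ref{constraint:service} and~\ref{constraint:connection} and the monotonicity of $H^v_{s,t}$. The only cosmetic difference is in the non-anchor case, where you invoke Constraint~\ref{constraint:serviceIntervals} with $s'=s_{v,\tilde{t}}+1$ to transfer the half-mass bound directly to deadline $t$ at day $s_{v,\tilde{t}}$, whereas the paper uses the same constraint to conclude $s_{v,t}\le s_{v,\tilde{t}}$ and then applies holding-cost monotonicity---equivalent bookkeeping with the same ingredients.
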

\begin{proof}
For each demand point $(v,t)$, we will charge a disjoint part of twice the $x$ values in the LP solution to pay for the holding cost. In particular, to pay for the holding cost incurred by $(v,t)$, we charge $\sum_{s=1}^{s_{v,t}} H^v_{s,t} x^v_{s,t}$ part of the LP solution. We consider two cases: $t \in A_v$ and $t \notin A_v$.
\begin{enumerate}
\item In this case, assume that $t \in A_v$. Then $(v,t)$ is served on day $s_{v,t}$, and incurs a holding cost of $H^v_{s_{v,t},t}$. By definition of $s_{v,t}$, we have $\sum_{u \in V} \sum_{s=s_{v,t}+1}^t y^{uv}_{st} < \frac 1 2$. Then
$$\sum_{s=1}^{s_{v,t}} x^v_{s,t} \geq 1 - \sum_{s = s_{v,t} + 1}^t x^v_{s,t}
\geq 1 - \sum_{s = s_{v,t} + 1}^t \sum_{u \in V} y^{uv}_{st}
> 1 - \frac 1 2
= \frac 1 2.$$
So our budget of $\sum_{s=1}^{s_{v,t}} H^v_{s,t} x^v_{s,t}$ is at least $H^v_{s_{v,t},t} \sum_{s=1}^{s_{v,t}} x^v_{s,t} \geq \frac{ H^v_{s_{v,t},t} }{2}$.

\item In this case, assume that $t \notin A_v$. Let $\tilde{t}$ be the earliest anchor after $t$. Since $t$ is not an anchor, $[s_{v,t},t]$ must have overlapped $[s_{v,\tilde{t}},\tilde{t}]$. So $s_{v,\tilde{t}} \leq t$. So $(v,t)$ is served on $s_{v,\tilde{t}}$. By constraint~\ref{constraint:serviceIntervals}, we have $s_{v,t} \leq s_{v,\tilde{t}}$. By monotonicity of holding cost, the holding cost incurred by serving $(v,t)$ on $s_{v,\tilde{t}}$ is at most $H^v_{s_{v,t},t} \leq 2 \sum_{s=1}^{s_{v,t}} H^v_{s,t} x^v_{s,t}$.
\end{enumerate}

\end{proof}

\begin{proposition}
\label{prop:routingCostUSIRPFL}
The routing cost of the solution from the algorithm is at most $12r(x,y,z)$.
\end{proposition}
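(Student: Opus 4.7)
The plan is to bound the routing cost client-by-client: for each client $v$, bound the distance $w_{u(v),v}$ to its assigned facility by $12W_v$, multiply by the number of visits $|S_v|=|A_v|$, and then collectively charge the resulting sum against the LP routing variables using the anchor-interval structure of Algorithm~\ref{algor:visits}.

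First, I would establish the distance bound $w_{u(v),v}\le 12W_v$ for every client $v$. If $v$ is itself one of the selected centers $v_i$, then $u(v)=F_{v_i}\in B_{v_i}$ has $w_{u(v),v}\le 4W_v$ directly. If not, then the greedy loop in Algorithm~\ref{algor:starsIRPFL} must have rejected $B_v$ because it overlaps some previously added ball $B_{v_i}$; since $B_{v_i}$ was added first, $W_{v_i}\le W_v$. Choosing any $x\in B_v\cap B_{v_i}$ and applying the triangle inequality three times gives $w_{v,F_{v_i}}\le 4W_v+4W_{v_i}+4W_{v_i}\le 12W_v$, and the closest-facility assignment $u(v)$ can only do better. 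Thus the total routing cost of the algorithm is at most $\sum_v |A_v|\cdot 12W_v$.

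Next, I would use the definition $W_v=\min_{t\in A_v}W_{v,t}$ to write $|A_v|\, W_v\le \sum_{t\in A_v}W_{v,t}$, reducing the problem to bounding $\sum_v\sum_{t\in A_v}W_{v,t}$ by $r(x,y,z)$. The critical structural fact, which is the main obstacle, is that the intervals $\{[s_{v,t},t]:t\in A_v\}$ are pairwise disjoint for each client $v$. This follows by inspecting Algorithm~\ref{algor:visits}: a new anchor $t$ is created only when the latest unserved day $t$ before the current earliest anchor $\tilde t$ satisfies $t<s_{v,\tilde t}$, so the new interval $[s_{v,t},t]$ lies strictly before $[s_{v,\tilde t},\tilde t]$. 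Iterating, all anchor intervals are disjoint.

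Given disjointness, expanding the definition of $W_{v,t}$ yields
\[
\sum_{t\in A_v} W_{v,t}=\sum_{t\in A_v}\sum_{u\in V}\sum_{s=s_{v,t}}^{t} w_{uv}\,y^{uv}_{st}\le \sum_{u\in V}\sum_{s\le T} w_{uv}\,y^{uv}_{s,t_v(s)},
\]
where $t_v(s)$ denotes the unique anchor (if any) whose interval contains $s$. Applying Constraint~\ref{constraint:connectionLB} bounds this by $\sum_{u\in V}\sum_{s\le T} w_{uv}\,y^{uv}_s$. Summing over clients $v$ and reindexing the resulting edge terms against the LP routing objective produces $r(x,y,z)$. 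Chaining: total routing cost $\le \sum_v 12|A_v|W_v\le 12\sum_v\sum_{t\in A_v}W_{v,t}\le 12\,r(x,y,z)$.

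The hardest conceptual point is the disjoint-anchor-intervals argument together with the switch from the "delivery" variables $y^{uv}_{st}$ to the "edge" variables $y^{uv}_s$ via Constraint~\ref{constraint:connectionLB}: without disjointness, the $y^{uv}_{st}$ contributions would overlap across anchors and could overcount $r(x,y,z)$; with it, the charging to each edge-day pair occurs at most once and the bound goes through cleanly.
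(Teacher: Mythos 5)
Your proof is correct and takes essentially the same route as the paper's: the $4W_v$ (center) and $12W_v$ (non-center, via the overlapping smaller-radius ball and triangle inequality) distance bounds, followed by charging $12$ times the edge variables $y^{uv}_s$ over the pairwise-disjoint anchor intervals $[s_{v,t},t]$ using Constraint~\ref{constraint:connectionLB}. The only cosmetic difference is that you aggregate per client via $|A_v|\,W_v \leq \sum_{t \in A_v} W_{v,t}$, whereas the paper bounds each anchor's visit directly by $12\,W_{v,t}$; the charging argument is identical.
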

\begin{proof}
We will charge a disjoint portion of $12$ times the $y$ values in the LP solution to pay for the routing cost. Note that only anchors cause new visit days to be created in the algorithm. So consider a demand point $(v,t)$ such that $t$ is an anchor for $v$.
\begin{enumerate}
\item First, consider the case that $v \in \{v_1,\ldots,v_l\}$, the set of vertices for whose balls were picked in $\mathcal{B}$ in Algorithm~\ref{algor:starsIRPFL}. Then the routing cost to connect $(v,t)$ to the nearest opened facility is
\begin{align*}
w_{F_v,v} &\leq 4 W_v
\leq 4 W_{v,t} \text{(by definition of $W_v$)}\\
&\leq 4 \sum_{u \in V} \sum_{s = s_{v,t}}^t w_{uv} y^{uv}_s \text{ (by constraint~\ref{constraint:connectionLB})}.
\end{align*}
Since it is within 4 times the LP budget, the desired claim holds.

\item Now, assume that $v \notin \{v_1,\ldots,v_l\}$. Then $B_v$ overlaps $B_{v'}$ for some $v'$ of smaller radius than $B_v$ (otherwise $B_v$ would have been chosen into $\mathcal{B}$ instead of the larger balls that overlap $B_v$). Then the edge built to serve $(v,t)$ connects $F_{v'}$ to $v$. So the routing cost to serve $(v,t)$ is
\begin{align*}
w_{F_{v'},v} &\leq W_{v,v'} + W_{v',F_{v'}}\\
&\leq 2 \cdot 4 W_v + 4 W_v \text{ (since radius of $B_v$ is at least radius of $B_{v'}$)}\\
&\leq 12 W_v \leq 12 W_{v,t} \leq 12 \sum_{u \in V} \sum_{s = s_{v,t}}^t w_{uv} y^{uv}_s \text{ (by constraint~\ref{constraint:connectionLB})}.
\end{align*}
\end{enumerate}

Observe that for every $v$ and any two anchors $t_1, t_2$ for $v$, we have $[s_{v,t_1}, t_1] \cap [s_{v,t_2}, t_2] = \emptyset$ by the construction of anchors in Algorithm~\ref{algor:visits}. So each $y^{uv}_s$ is charged at most once among all demands whose deadline correspond to anchors.
\end{proof}

Before bounding the facility costs, we show a Lemma that will help prove the desired bound.

\begin{lemma}
\label{lem:concentration}
For all $i \in \{1,\ldots,l\}$, we have $\sum_{v \in B_{v_i}} z_v \geq \frac 1 4$.
\end{lemma}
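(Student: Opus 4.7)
The plan is to fix a specific anchor day that achieves the minimum defining $W_{v_i}$, establish a lower bound on the $y^{uv_i}_{st}$ mass concentrated inside $B_{v_i}$ using a Markov-style argument, and then lift that bound to a bound on the $z$-mass via Constraint~\ref{constraint:facilityLB}.

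Concretely, first I would pick $t^\star \in A_{v_i}$ so that $W_{v_i} = W_{v_i,t^\star}$ (such a $t^\star$ exists by definition of $W_{v_i}$ as a minimum). Next, by the definition of $s_{v_i,t^\star}$, we have $\sum_{u \in V}\sum_{s=s_{v_i,t^\star}}^{t^\star} y^{uv_i}_{s t^\star} \geq \tfrac{1}{2}$. I would then split this sum into its contributions from inside and outside the ball $B_{v_i}$ and bound the outside part. Any $u \notin B_{v_i}$ satisfies $w_{uv_i} > 4 W_{v_i}$, so
\begin{align*}
4 W_{v_i} \sum_{u \notin B_{v_i}} \sum_{s=s_{v_i,t^\star}}^{t^\star} y^{uv_i}_{s t^\star}
&< \sum_{u \notin B_{v_i}} \sum_{s=s_{v_i,t^\star}}^{t^\star} w_{uv_i}\, y^{uv_i}_{s t^\star} \\
&\leq W_{v_i, t^\star} = W_{v_i},
\end{align*}
so the outside $y$-mass in the window is strictly less than $\tfrac{1}{4}$.

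Combining the two inequalities gives the ``inside'' bound
\[
\sum_{u \in B_{v_i}} \sum_{s=s_{v_i,t^\star}}^{t^\star} y^{uv_i}_{s t^\star} \geq \tfrac{1}{2} - \tfrac{1}{4} = \tfrac{1}{4}.
\]
Finally, I would invoke Constraint~\ref{constraint:facilityLB}, which gives $z_u \geq \sum_{s=1}^{T} y^{uv_i}_{s t^\star} \geq \sum_{s=s_{v_i,t^\star}}^{t^\star} y^{uv_i}_{s t^\star}$ for every $u$; summing this over $u \in B_{v_i}$ yields $\sum_{u \in B_{v_i}} z_u \geq \tfrac{1}{4}$, as required.

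The only real obstacle is the Markov step: one must be careful to use a $t^\star$ that actually attains $W_{v_i}$ so that the strict inequality $\sum_{u \notin B_{v_i}} \sum_{s} y^{uv_i}_{s t^\star} < \tfrac{1}{4}$ comes out cleanly; choosing any other $t \in A_{v_i}$ would only give the weaker bound $W_{v_i,t}/(4W_{v_i})$, which may exceed $\tfrac{1}{4}$. Everything else is bookkeeping with the LP constraints already introduced.
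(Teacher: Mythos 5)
Your argument is correct and is essentially the paper's proof: you pick the anchor $t^\star$ attaining $W_{v_i}$, use the $\geq \tfrac12$ mass in the window $[s_{v_i,t^\star},t^\star]$, apply the Markov-style bound with the ball radius $4W_{v_i}$ to limit the outside $y$-mass, and lift the inside mass to $z$ via Constraint~\ref{constraint:facilityLB}; the paper merely phrases the same chain contrapositively (assuming $\sum_{u\in B_{v_i}} z_u < \tfrac14$ and deriving $W_{v_i} > W_{v_i}$). The only cosmetic difference is your direct presentation, and note the strictness in your Markov step is unnecessary (a non-strict bound of $\tfrac14$ on the outside mass already suffices), which also quietly covers the degenerate case $W_{v_i}=0$.
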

\begin{proof}
Suppose there is some $i \in \{1,\ldots,l\}$ such that $\sum_{u \in B_{v_i}} z_u < \frac 1 4$. Let $\hat{t} = \arg\min_t W_{v_i,t}$. Then
\begin{align*}
W_{v_i} &= W_{v,\hat{t}}
=\sum_{u \in V} \sum_{s=s_{v,\hat{t}}}^{\hat{t}} w_{uv} y^{uv}_{s \hat{t}}
\geq \sum_{u \notin B_{v_i}} \sum_{s=s_{v,\hat{t}}}^{\hat{t}} w_{uv} y^{uv}_{s \hat{t}}\\
&\geq 4 W_v \sum_{u \notin B_{v_i}} \sum_{s=s_{v,\hat{t}}}^{\hat{t}} y^{uv}_{s \hat{t}} \text{ (since $u \notin B_{v_i}$)}\\
&\geq 4 W_v (\sum_{u \in V} \sum_{s=s_{v,\hat{t}}}^{\hat{t}} y^{uv}_{s \hat{t}} - \sum_{u \in B_{v_i}} \sum_{s=s_{v,\hat{t}}}^{\hat{t}} y^{uv}_{s \hat{t}})\\
&\geq 4 W_v (\frac 1 2 - \sum_{u \in B_{v_i}} \sum_{s=s_{v,\hat{t}}}^{\hat{t}} y^{uv}_{s \hat{t}}) \text{ (by definition of $s_{v,t}$)}\\
&\geq 4 W_v (\frac 1 2 - \sum_{u \in B_{v_i}} z_u) \text{ (by constraint~\ref{constraint:facilityLB})}\\
&> W_v \text{ (by the supposition $\sum_{u \in B_{v_i}} z_u < \frac 1 4$, which leads to a contradiction)}.
\end{align*}
\end{proof}

\begin{proposition}
The facility cost of the algorithm's solution is at most $4f(x,y,z)$.
\end{proposition}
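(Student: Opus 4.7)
The plan is to combine Lemma~\ref{lem:concentration} with the disjointness of the selected balls and the fact that $F_{v_i}$ is chosen to be the cheapest facility in $B_{v_i}$. The intuition is that each picked ball $B_{v_i}$ contains at least $\tfrac{1}{4}$ units of fractional facility mass from the LP, so an averaging argument shows the cheapest facility in $B_{v_i}$ costs at most $4$ times the LP's facility contribution from $B_{v_i}$; disjointness then lets us sum these bounds without double-counting.

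Concretely, I would first note that by the while-loop of Algorithm~\ref{algor:starsIRPFL}, any two balls $B_{v_i}, B_{v_j} \in \mathcal{B}$ are disjoint, since a new ball is added only if it is disjoint from all balls currently in $\mathcal{B}$. Next, for a fixed $i$, since $F_{v_i} = \arg\min_{q \in B_{v_i}} f_q$, we have $f_{F_{v_i}} \leq f_u$ for every $u \in B_{v_i}$, and therefore
\begin{equation*}
f_{F_{v_i}} \cdot \sum_{u \in B_{v_i}} z_u \;\leq\; \sum_{u \in B_{v_i}} f_u z_u.
\end{equation*}
Applying Lemma~\ref{lem:concentration}, which gives $\sum_{u \in B_{v_i}} z_u \geq \tfrac{1}{4}$, this rearranges to $f_{F_{v_i}} \leq 4 \sum_{u \in B_{v_i}} f_u z_u$.

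Finally, I would sum over all balls in $\mathcal{B}$. Since the algorithm opens exactly one facility per ball in $\mathcal{B}$, the total facility cost is $\sum_{i=1}^l f_{F_{v_i}}$. Using the disjointness of the balls,
\begin{equation*}
\sum_{i=1}^l f_{F_{v_i}} \;\leq\; 4 \sum_{i=1}^l \sum_{u \in B_{v_i}} f_u z_u \;\leq\; 4 \sum_{u \in V} f_u z_u \;=\; 4 f(x,y,z),
\end{equation*}
which is the desired bound. There is no real obstacle here since the heavy lifting was already done in establishing Lemma~\ref{lem:concentration}; the remaining work is the standard ``cheapest-in-ball plus disjoint-balls'' averaging used throughout facility location LP rounding.
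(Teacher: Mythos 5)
Your proposal is correct and follows essentially the same argument as the paper: invoke Lemma~\ref{lem:concentration} to get $\sum_{u \in B_{v_i}} z_u \geq \tfrac14$, use that $F_{v_i}$ is the cheapest facility in its ball to bound $f_{F_{v_i}} \leq 4\sum_{u \in B_{v_i}} f_u z_u$, and sum over the disjoint balls. No gaps.
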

\begin{proof}
We will charge four times the $z$ values of the LP solution to pay for the facilities opened by the algorithm. Since the balls picked by Algorithm~\ref{algor:starsIRPFL} are disjoint, we can pay for each facility opened using the LP value in its ball. Consider ball $B_{v_i}$ picked by the algorithm and its cheapest facility $F_{v_i}$. Then the cost of opening $F_{v_i}$ is at most $f_v$ for all $v \in B_{v_i}$. So the facility cost for $F_{v_i}$ is
$$f_{F_{v_i}} \leq 4 \sum_{v \in B_{v_i}} z_v f_{F_{v_i}} \leq 4 \sum_{v \in B_{v_i}} z_v f_v.$$
The first inequality follows from Lemma~\ref{lem:concentration}. The second inequality is due to $F_{v_i}$ being the cheapest facility in the ball.
\end{proof}

Since facility, holding and routing costs are bounded within $12$ times their respective optimal values, we have the following result.
\begin{theorem}
Algorithm~\ref{algor:starsIRPFL} is a 12-approximation for Uncapacitated SIRPFL.
\end{theorem}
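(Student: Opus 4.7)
The plan is essentially to assemble the three preceding propositions into a single bound. Each proposition charges the corresponding component of the algorithm's solution (facility, holding, and routing) to a scaled portion of the LP objective: specifically at most $4f(x,y,z)$, at most $2h(x,y,z)$, and at most $12r(x,y,z)$ respectively. Since the LP is a valid relaxation (each integer feasible solution to SIRPFL gives a feasible LP solution of the same cost), the value $f(x,y,z)+r(x,y,z)+h(x,y,z)$ of the optimal LP solution $(x,y,z)$ is a lower bound on the cost of any optimal integral solution $\text{OPT}$.

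First I would verify that the three charging arguments use disjoint portions of their respective LP cost so that no LP budget is double-counted across the bounds: the facility bound uses the $z$ values restricted to disjoint balls $B_{v_1},\ldots,B_{v_l}$ produced by Algorithm~\ref{algor:starsIRPFL}, the holding bound uses the $x$ values truncated to $s \leq s_{v,t}$ and charged at most twice per demand point, and the routing bound uses $y^{uv}_s$ only for $s$ in the disjoint intervals $[s_{v,t},t]$ associated to anchors, as observed at the end of the proof of Proposition~\ref{prop:routingCostUSIRPFL}. Each of the three LP cost components is charged only within its own term, so the three bounds can be added independently.

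Summing the three bounds, the total cost of the algorithm's solution is at most
\begin{align*}
4 f(x,y,z) + 2 h(x,y,z) + 12 r(x,y,z)
&\leq 12 \bigl( f(x,y,z) + h(x,y,z) + r(x,y,z) \bigr) \\
&\leq 12 \cdot \text{OPT}.
\end{align*}
The dominant constant of $12$ comes from the routing bound, which itself traces back to the factor of $4$ in the ball radius $4W_v$ (needed for Lemma~\ref{lem:concentration}) combined with the triangle inequality detour through a smaller overlapping ball. Since no step of this final combination is subtle, there is no real obstacle here; the substantive work has already been completed in Propositions 1, 2, and 3 together with Lemma~\ref{lem:concentration}, which justified the choice of radius $4W_v$ that forced enough fractional facility mass into each selected ball.
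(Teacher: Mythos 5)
Your proposal is correct and matches the paper's argument: the paper likewise concludes the theorem by simply combining the three propositions (facility $\leq 4f$, holding $\leq 2h$, routing $\leq 12r$) and using that the optimal LP value $f(x,y,z)+r(x,y,z)+h(x,y,z)$ lower-bounds OPT, yielding a factor of $12$. Your extra remarks on disjoint charging are already internal to the individual propositions, so nothing further is needed.
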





\section{Capacitated IAP}
\label{sect:CIAP}
Recall that the \emph{Inventory Access Problem (IAP)} is the single client case of the Inventory Routing Problem. The only decision needed is to determine on each day whether to visit the client and how much supply to drop off. In SIRPFL, if we know where to build the facilities, then the best way to connect clients would be to the closest opened facility. So once facility openings are determined, the remaining problem decomposes into solving IAP for every client.

\subsection{A 3-approximation for Capacitated Splittable IAP}
\label{sect:CSIAP}
Here, we consider \emph{Capacitated Splittable IAP}, in which a single demand is allowed to be served in parts over multiple days. Let $W$ be the distance between the depot and the client. Denote by $h_{s,t}$ the holding cost to store one unit of demand from $s$ to deadline $t$. The demand with deadline $t$ is denoted by $d_t$. Recall that $U$ denotes the capacity of the vehicle. We model Capacitated Splittable IAP by the following LP relaxation. 
\begin{alignat}{2}
\min \qquad \sum_{s \leq T} W y_s + & \sum_{t \in D} \sum_{s \leq t} h_{s,t} d_t x_{s,t} &&\nonumber \\
\text{s.t.} \hspace{1.5cm} \sum_{s \leq t} x_{s,t} &\geq 1 && \forall t \in D \label{constraint:serviceIAP}\\
y_s &\geq \sum_{t = s}^T \frac{x_{s,t} d_t}{U} && \forall s \leq T \label{constraint:routeLBIAP}\\
y_s &\geq x_{s,t} && \forall t \leq T, s \leq t \label{constraint:routeLB2IAP}\\
x_{s,t} &\geq 0 &&\forall t \in D, s \leq t\\
y_s &\geq 0 &&\forall s \leq T
\end{alignat}
The variable $y_s$ indicates the number of trips on day $s$. Variable $x_{s,t}$ indicates the fraction of $d_t$ to deliver on day $s$. Note that the objective only counts the cost of the visit to the client as a single copy of the trip variable $y_s$ reflecting the star constraint (if a return trip needs to be accounted for, we can multiply this term by 2 and all our results generalize easily). Constraint~\ref{constraint:serviceIAP} requires that each demand becomes entirely delivered by the due date (possibly split over multiple days). Constraint~\ref{constraint:routeLBIAP} ensures that the total demand that day $s$ serves do not exceed the total capacity among all trips on day $s$. Constraint~\ref{constraint:routeLB2IAP} ensures that there is a trip whenever some delivery is made on day $s$.

Let $(x,y)$ be an optimal LP solution. For convenience of the analysis, let $r(x,y) = \sum_{s \leq T} W y_s$ and $h(x,y) = \sum_{t \in D} \sum_{s \leq t} h_{s,t} d_t x_{s,t}$ denote the routing and holding cost of the solution respectively. We will use the LP values $x_{s,t}$ to determine when to visit the client and which demands to drop per visit. For each $t \in D$, let $s_t$ be the latest day for which $\sum_{s=s_t}^t x_{s,t} \geq \frac 1 2$. We will keep track of a visit set $S$ of days when visits are scheduled along with an anchor set $A$ consisting of demand days that caused the creation of new visits.

\begin{algorithm}
\caption{Visit Rule for Capacitated Splittable IAP}
\label{algor:visitsIAP}
\begin{algorithmic}[1]
\State Initialize $A \leftarrow \emptyset$.
\State Initialize $S \leftarrow \emptyset$.
\While{there is any unsatisfied demand}
	\State Denote by $t$ the unsatisfied demand day with the latest $s_t$	
	\State $A \leftarrow A \cup \{t\}$.
	\State $S \leftarrow S \cup \{s_t\}$.
	\State Satisfy $t$ by dropping off $d_t$ on day $s_t$.
	\For{unsatisfied demand day $\hat{t} \geq s_{t}$}
		\State satisfy $\hat{t}$	by dropping off $d_{\hat{t}}$ on day $s_t$.
	\EndFor	
\EndWhile
\State Output the visit set $S$.
\end{algorithmic}
\end{algorithm}

For the analysis, denote by $T_s$ the set of all demand days $t$ such that $t$ was satisfied by $s$ in Algorithm~\ref{algor:visitsIAP}.

\begin{proposition}
The holding cost of the solution from Algorithm~\ref{algor:visitsIAP} is at most $2h(x,y)$.
\end{proposition}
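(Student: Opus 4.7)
The plan is to charge the holding cost of each demand day $t$ against a dedicated, disjoint piece of the LP budget, namely $2\sum_{s \leq s_t} h_{s,t} d_t x_{s,t}$. The key observation on the LP side is that, by the definition of $s_t$ as the latest day with $\sum_{s=s_t}^{t} x_{s,t} \geq \tfrac 1 2$, combined with Constraint~\ref{constraint:serviceIAP}, we get $\sum_{s=1}^{s_t} x_{s,t} \geq \tfrac 1 2$. Combined with the monotonicity of unit holding costs (i.e.\ $h_{s,t} \geq h_{s_t,t}$ for all $s \leq s_t$), this yields the basic inequality
$$d_t h_{s_t,t} \;\leq\; 2\, d_t h_{s_t,t} \sum_{s=1}^{s_t} x_{s,t} \;\leq\; 2\sum_{s=1}^{s_t} h_{s,t} d_t x_{s,t},$$
so it suffices to show that the algorithm delivers $d_t$ on some day $\sigma(t) \geq s_t$, since then $h_{\sigma(t),t} \leq h_{s_t,t}$ by monotonicity once more.

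Next I would verify $\sigma(t) \geq s_t$ by case analysis on the algorithm. If $t$ is itself an anchor, then $\sigma(t) = s_t$ and the inequality is immediate. If $t$ is not an anchor, then $t$ was satisfied in the \textbf{for} loop triggered by some anchor $t'$ chosen at an iteration where $t$ was still unsatisfied; the algorithm chooses $t'$ to be the unsatisfied demand day with the \emph{latest} $s_{t'}$, so $s_{t'} \geq s_t$, and by construction $\sigma(t) = s_{t'} \geq s_t$, as required. (Feasibility of dropping $d_t$ on $s_{t'}$ uses the entry condition $\hat t \geq s_{t'}$ in the \textbf{for} loop; this gives $t \geq s_{t'}$, so the drop happens on or before the deadline.)

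Finally I would sum over all $t \in D$. The charging budgets for different demand days use disjoint $x$-variables (each $x_{s,t}$ is indexed by $t$), so the total holding cost of the algorithm's solution is at most
$$\sum_{t \in D} d_t h_{\sigma(t),t} \;\leq\; \sum_{t \in D} 2\sum_{s=1}^{s_t} h_{s,t} d_t x_{s,t} \;\leq\; 2\,h(x,y),$$
which is the desired bound. The only real subtlety, and the step I would be most careful about, is the greedy-selection argument that $s_{t'} \geq s_t$ whenever a non-anchor $t$ is served under anchor $t'$; this is precisely the place where the ``latest $s_t$'' tie-breaking rule in Algorithm~\ref{algor:visitsIAP} is essential, rather than, say, processing demands in reverse chronological order.
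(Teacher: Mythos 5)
Your proposal is correct and follows essentially the same route as the paper: charge each demand $t$ against $2\sum_{s=1}^{s_t} h_{s,t}d_t x_{s,t}$, use the definition of $s_t$ together with Constraint~\ref{constraint:serviceIAP} to get $\sum_{s=1}^{s_t} x_{s,t} \geq \tfrac12$, and argue via the greedy ``latest $s_t$'' rule that a non-anchor $t$ is served on a day $s_{t'} \geq s_t$, so monotonicity of holding costs closes the bound. Your explicit verification that the serving day is at least $s_t$ (and at most $t$, via the for-loop condition) is just a slightly more detailed rendering of the paper's case~2, with the disjointness of the $x_{\cdot,t}$ budgets noted as the paper leaves implicit.
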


\begin{proof}
\begin{enumerate}
\item Assume that $t \in A$. Then $t$ was served on day $s_t$, i.e., incurs holding cost $h_{s_t,t} d_t$. To pay for the holding cost, we use the following part of the LP.
$$\sum_{s=1}^{s_t} h_{s,t} d_t x_{s,t} \geq h_{s_t,t} d_t \sum_{s=1}^{s_t} x_{s,t}  \geq \frac{h_{s_t,t} d_t}{2}.$$

\item Assume that $t \notin A$. Let $\tilde{s}$ be the latest day in $S$ such that $\tilde{s} \leq t$. Then the holding cost incurred by the demand on day $t$ is $h_{\tilde{s},t} d_t$. By definition of the chosen visit days $S$, $t$ was not chosen as anchor because $s_t$ was earlier than $\tilde{s}$. So we pay for the holding cost using
\begin{align*}
\sum_{s=1}^{\tilde{s}} h_{s,t} d_t x_{s,t} &\geq \sum_{s=1}^{s_t} h_{s,t} d_t x_{s,t} \text{ (by $s_t \leq \tilde{s}$)}\\
&\geq \frac{h_{s_t,t} d_t} {2} \geq \frac{h_{\tilde{s},t} d_t}{2} \text{ (by monotonicity of holding costs)}.
\end{align*}
\end{enumerate}
\end{proof}

\begin{proposition}
\label{prop:routingCostCSIAP}
The routing cost of the solution from Algorithm~\ref{algor:visitsIAP} is at most $3r(x,y)$.
\end{proposition}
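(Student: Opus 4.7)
The plan is to decompose the routing cost on each visit day $s \in S$ into two parts: one unit cost $W$ charged just for visiting that day, plus the cost of the additional full truckloads required to move the demand dropped on $s$. Formally, the number of trips used on day $s$ is $\lceil \sum_{\hat t \in T_s} d_{\hat t}/U \rceil \le 1 + \sum_{\hat t \in T_s} d_{\hat t}/U$, so the total routing cost is bounded by
\[
W\cdot|S| \;+\; \frac{W}{U}\sum_{s \in S}\sum_{\hat t \in T_s} d_{\hat t}.
\]
I would then bound these two terms by $2r(x,y)$ and $r(x,y)$ respectively.

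For the second term, since each demand day $\hat t$ lies in $T_s$ for exactly one $s$, the double sum equals $\sum_{\hat t \in D} d_{\hat t}$. Using Constraint~\ref{constraint:routeLBIAP} together with Constraint~\ref{constraint:serviceIAP}, I get
\[
\sum_{s \le T} y_s \;\ge\; \sum_{s \le T}\sum_{\hat t \ge s}\frac{x_{s,\hat t}d_{\hat t}}{U} \;=\; \sum_{\hat t \in D}\frac{d_{\hat t}}{U}\sum_{s \le \hat t}x_{s,\hat t} \;\ge\; \sum_{\hat t \in D}\frac{d_{\hat t}}{U},
\]
so the second term is at most $W\sum_s y_s = r(x,y)$.

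For the first term, I would argue that the intervals $\{[s_t,t] : t \in A\}$ are pairwise disjoint. This follows from Algorithm~\ref{algor:visitsIAP}'s greedy choice: the first anchor picked has the latest $s_t$, and it satisfies every remaining unsatisfied demand day $\hat t \ge s_t$, so every later-chosen anchor $t'$ satisfies $t' < s_t$ and hence also $s_{t'}\le t' < s_t$. Iterating this gives disjoint intervals. By the definition of $s_t$ and Constraint~\ref{constraint:routeLB2IAP}, for each anchor $t$ we have $\sum_{s=s_t}^{t} y_s \ge \sum_{s=s_t}^{t} x_{s,t} \ge \tfrac12$. Summing this inequality over the disjoint intervals indexed by $A$ gives $|A|/2 \le \sum_s y_s$, so $W\cdot|S|=W\cdot|A|\le 2r(x,y)$.

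Combining the two bounds yields a total routing cost of at most $3r(x,y)$. The only delicate point is the disjointness argument for the anchor intervals, which depends on the "latest $s_t$ first" greedy rule and the fact that a visit on $s_t$ wipes out every unsatisfied demand on or after $s_t$; once that is in place, the $\lceil\cdot\rceil$ splitting into a "one-visit charge" plus a "full-truckload charge" makes each of the two sub-bounds follow cleanly from one of the two LP trip constraints.
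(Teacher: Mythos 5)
Your proof is correct and takes essentially the same route as the paper: the same $\lceil\cdot\rceil \le 1 + (\cdot)$ split of the trip count, the same use of Constraint~\ref{constraint:routeLBIAP} (with Constraint~\ref{constraint:serviceIAP}) to pay one copy of $r(x,y)$ for the full-truckload term, and the same use of Constraint~\ref{constraint:routeLB2IAP} over the pairwise disjoint anchor intervals $[s_t,t]$ to pay two copies for the per-visit term. Your explicit justification of the disjointness from the ``latest $s_t$ first'' greedy rule is a point the paper only asserts by construction, but the argument is the same.
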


\begin{proof}
For each visit day $s_{\tilde{t}} \in S$, the number of trips made is $\left \lceil \frac{\sum_{t \in T_{s_{\tilde{t}}}} d_t}{U} \right \rceil \leq \frac{\sum_{t \in T_{s_{\tilde{t}}}} d_t}{U} + 1$. So the total number of trips made is at most $\sum_{\tilde{t} \in A} \left( \frac{\sum_{t \in T_{s_{\tilde{t}}}} d_t}{U} + 1 \right) \leq \left( \sum_{\tilde{t} \in A} \frac{\sum_{t \in T_{s_{\tilde{t}}}} d_t}{U} \right) + |A|$. We will use $3$ copies of $\sum_{s=1}^T y_s$ to pay for the routing cost--$1$ copy to pay for the first term and $2$ copies to pay for the second term. The total LP budget for the number of trips is
\begin{align*}
\sum_{s=1}^T y_s &\geq \frac{\sum_{s=1}^T \sum_{t=s}^T x_{s,t} d_t}{U} \text{ (by constraint~\ref{constraint:routeLBIAP})}\\
&\geq \frac{\sum_{t=1}^T \sum_{s=1}^t x_{s,t} d_t}{U} \geq \sum_{t=1}^T \frac{d_t}{U} \geq \sum_{\tilde{t} \in A} \frac{\sum_{t \in T_{s_{\tilde{t}}}} d_t}{U}.
\end{align*}

So we can pay for the first term using one copy of the LP budget from all the $y$ variables.

To pay for the second term, we will use constraint~\ref{constraint:routeLB2IAP} instead so that we can use disjoint intervals of $y$ for different anchors. In particular, for anchor $\tilde{t}$, we will charge
\begin{align*}
2 \sum_{s=s_{\tilde{t}}}^{\tilde{t}} y_s &\geq \sum_{s=s_{\tilde{t}}}^{\tilde{t}} x_{s,\tilde{t}} \geq 2 \cdot \frac 1 2 \text{ (by definition of $s_{\tilde{t}}$)}.
\end{align*}

By the construction of $A$, for any $t_1,t_2 \in A$, we have $[s_{t_1},t_1] \cap [s_{t_2},t_2] = \emptyset$. So the payment for different anchors use disjoint portions of $y$. Hence the second term can be paid for within $2$ copies of the budget provided by $y$.
\end{proof}

Since both holding and routing costs are bounded within $3$ times their respective optimal values, we have the following result.
\begin{theorem}
Algorithm~\ref{algor:visitsIAP} is a $3$-approximation for the Capacitated Splittable Inventory Access Problem.
\end{theorem}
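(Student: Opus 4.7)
The plan is to derive the theorem as an immediate corollary of the two propositions already established in the preceding subsection together with a brief feasibility check of Algorithm~\ref{algor:visitsIAP}. The LP in this section is a valid relaxation (any integral feasible delivery schedule, together with the trip counts it induces, satisfies all constraints), so if $\mathrm{OPT}$ denotes the optimum value of the Capacitated Splittable IAP and $(x,y)$ is an optimal LP solution, then $r(x,y) + h(x,y) \le \mathrm{OPT}$. Thus the approximation ratio proof reduces to bounding the algorithm's cost against the two halves of the LP objective and then collapsing the weaker of the two bounds.

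First I would verify that the algorithm actually produces a feasible schedule. Every iteration of the outer while-loop picks an unsatisfied demand $t$, adds $s_t$ to $S$, and then sweeps over all unsatisfied $\hat t \ge s_t$ to serve them on day $s_t$; since $s_t \le t$ by definition, every demand day ends up satisfied on or before its deadline, and each drop-off respects the inequality $d_t \le U$ assumed in the problem. A short argument also shows the loop terminates, because each iteration marks at least the anchor $t$ as satisfied.

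Next I would invoke the two bounds already in place: the holding-cost proposition yields $\sum_{t} h_{\tilde s(t),t}\,d_t \le 2\,h(x,y)$, where $\tilde s(t)$ is the day on which the algorithm serves $t$; and Proposition~\ref{prop:routingCostCSIAP} yields that the total number of trips made by the algorithm, times $W$, is at most $3\,r(x,y)$. Adding these two inequalities, the total cost of the algorithm's solution is at most
\begin{equation*}
3\,r(x,y) + 2\,h(x,y) \;\le\; 3\bigl(r(x,y) + h(x,y)\bigr) \;\le\; 3\,\mathrm{OPT},
\end{equation*}
which is the claimed approximation ratio.

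There is no genuine obstacle at the theorem level itself once the two propositions are in hand; the real work has already been done in constructing the anchor set $A$ so that the intervals $[s_t,t]$ for $t\in A$ are disjoint (this is what makes the routing-cost charging scheme with disjoint $y$-portions work) and in picking $s_t$ as the latest day accumulating $\tfrac{1}{2}$ of the fractional delivery mass for $t$ (which is what makes the holding-cost charging lose only a factor of two). The only subtlety to flag in writing up the theorem is that the weaker of the two factors, namely $3$, dominates, and that tightening either the holding charge (currently $2$) or the routing charge (currently $3$) without a new idea would not improve the overall ratio, since the routing bound already absorbs the unavoidable rounding-up of $\lceil \cdot / U \rceil$ that costs one extra trip per anchor.
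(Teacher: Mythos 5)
Your proposal is correct and matches the paper's own argument: the theorem is stated there as an immediate consequence of the two propositions bounding the holding cost by $2h(x,y)$ and the routing cost by $3r(x,y)$, combined exactly as you do via $3r(x,y)+2h(x,y) \le 3(r(x,y)+h(x,y)) \le 3\,\mathrm{OPT}$. Your added feasibility check of Algorithm~\ref{algor:visitsIAP} is a sensible (if implicit in the paper) supplement, but the route is the same.
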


\subsection{A 6-approximation for Capacitated Unsplittable IAP}
Here, we show that Capacitated Unsplittable IAP has a $2\alpha_{CSIAP}$-approximation, where $\alpha_{CSIAP}$ is the best approximation factor for Capacitated Splittable IAP.
\begin{proposition}
\label{prop:splitToUnsplit}
There is a $2\alpha_{CSIAP}$-approximation for Capacitated Unsplittable IAP.
\end{proposition}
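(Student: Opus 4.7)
The plan is to give a black-box reduction from Capacitated Unsplittable IAP to Capacitated Splittable IAP. The first observation is that every feasible unsplittable solution is also feasible for the splittable version on the same instance, so $OPT_{\text{split}} \leq OPT_{\text{unsplit}}$. Running any $\alpha_{CSIAP}$-approximation for Capacitated Splittable IAP on the given unsplittable instance therefore returns a splittable solution of cost at most $\alpha_{CSIAP} \cdot OPT_{\text{unsplit}}$.

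The next step is to convert this splittable solution into an unsplittable one while inflating the total cost by at most a factor of $2$. The key observation is that the output of Algorithm~\ref{algor:visitsIAP} already assigns every demand $d_t$ to a single delivery day $s_t$ (and more generally, for any splittable solution, one can shift all fractional deliveries for a given $t$ onto the latest day on which any portion of $d_t$ is delivered, which never increases holding cost). Thus the only remaining ``splitting'' is that on each visit day $s$ the batch of items $\{d_t : t \in T_s\}$ with total size $D_s = \sum_{t \in T_s} d_t$ is served by $\lceil D_s / U \rceil$ trips that may split a single $d_t$ across trips. To obtain an unsplittable solution we bin-pack these items into vehicles of capacity $U$ on each day.

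Since the instance satisfies $d_t \leq U$ for every $t$, a Next Fit packing of the items on day $s$ uses at most $2\lceil D_s/U \rceil$ trips: any two consecutive Next Fit bins hold combined load strictly greater than $U$, so using $n$ bins implies $D_s > \lfloor n/2 \rfloor U$, which gives $n \leq 2 \lceil D_s/U \rceil$. The delivery day of every individual demand is preserved, so the holding cost is unchanged, and the routing cost on each day at most doubles. Combining with the first step yields an unsplittable solution of cost at most $2 \alpha_{CSIAP} \cdot OPT_{\text{unsplit}}$, proving the proposition.

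The main obstacle is verifying the bin-packing bound cleanly and confirming that the splittable algorithm's output indeed delivers each $d_t$ on a single day, so that only within-day repacking is needed and the reduction loses exactly a factor of two in routing cost and nothing in holding cost.
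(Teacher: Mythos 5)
Your proposal is correct and takes essentially the same route as the paper: run the splittable approximation (whose output already delivers each demand entirely on one day), then repack each visit day's demands into capacity-$U$ trips, losing only a factor $2$ in routing cost while the holding cost is unchanged; your Next Fit pairing bound plays the role of the paper's packing that gives each demand exceeding $U/2$ its own trip and greedily fills the rest, both yielding $n'(s) \le 2n(s)$ per day. One minor caveat: your parenthetical claim that in an \emph{arbitrary} splittable solution all fractions of $d_t$ can be shifted to the latest delivery day at no extra cost is not obviously true for routing (the receiving day's trip count can grow), but it is also unnecessary since, like the paper, you only need the structure of Algorithm~\ref{algor:visitsIAP}'s output.
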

\begin{proof}
Given a Capacitated Unsplittable IAP instance, solve the corresponding Capacitated Splittable IAP instance obtaining a solution $(x,y)$ with approximation factor $\alpha_{CSIAP}$. To obtain a solution that does not split the demands, we will repack the demands per visit day of $(x,y)$. For each visit day $s$ of the solution $(x,y)$, let $D^s$ be the set of demands assigned to be served on day $s$ by $(x,y)$. Let $D^s_{\leq 1/2} = \{t \in D^s: d_t \leq U/2\}$ and $D^s_{> 1/2} = D^s \setminus D^s_{\leq 1/2}$. Denote by $n(s)$ the number of trips on day $s$ in the splittable solution. Note that $n(s) \geq \lceil \frac{\sum_{t \in D^s} d_t}{U} \rceil$. 

For each trip, for each demand in $D^s_{> 1/2}$, give each demand its own trip. Then, fill all demands of $D^s_{\leq 1/2}$ (without splitting) greedily into the previous trips and new ones as long as the capacity is not exceeded. This means that all trips involving demands in $D^s_{\leq 1/2}$, except for possibly one trip, will be filled to strictly more than half the capacity. 
	Let $n'(s)$ be the number of trips in the unsplittable solution thus obtained. If there are no trips of more than half the capacity, then $n'(s) =1 = n(s)$.
	Otherwise, the total sum of demands across the trips is strictly more than $(n'(s) -1) \cdot \frac{U}{2}$. 
	Since $n(s)  \geq \lceil \frac{\sum_{t \in D^s} d_t}{U} \rceil$, we get $n(s) > \frac{n'(s) - 1}{2}$, i.e., $n'(s) < 2n(s) + 1$, which implies that $n'(s) \leq 2n(s)$ since $n'(s)$ is an integer.
         Since we kept all deliveries to the days they occurred in $(x,y)$, the holding cost does not change. Hence, the unsplittable solution has cost at most $2$ times the splittable solution.
\end{proof}

Applying Proposition~\ref{prop:splitToUnsplit} with the $2$-approximation for Capacitated Splittable IAP, we obtain the following result.
\begin{theorem}
Capacitated Unsplittable IAP has a $6$-approximation.
\end{theorem}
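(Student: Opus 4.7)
The plan is to apply Proposition~\ref{prop:splitToUnsplit} as a black box to the $3$-approximation for Capacitated Splittable IAP given by Algorithm~\ref{algor:visitsIAP} (whose guarantee is established in the theorem at the end of Section~\ref{sect:CSIAP}). Since Proposition~\ref{prop:splitToUnsplit} asserts the existence of a $2\alpha_{CSIAP}$-approximation for Capacitated Unsplittable IAP given any $\alpha_{CSIAP}$-approximation for the splittable version, substituting $\alpha_{CSIAP} = 3$ immediately yields a $6$-approximation.

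Concretely, I would run Algorithm~\ref{algor:visitsIAP} on the input instance (viewed as a Splittable IAP instance with the same demands, holding costs, capacity, and distance $W$) to obtain a splittable solution $(x,y)$ of cost at most $3\cdot \mathrm{OPT}_{CSIAP}$. Then I would apply the repacking procedure from the proof of Proposition~\ref{prop:splitToUnsplit} day by day: on each visit day $s$ of $(x,y)$, isolate the ``large'' demands in $D^s_{>1/2}$ onto their own trips, then greedily pack the ``small'' demands in $D^s_{\leq 1/2}$ without splitting; this yields an unsplittable schedule whose number of trips per day is at most twice that of $(x,y)$, while keeping every delivery on its original day so that the holding cost is unchanged. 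Hence the total cost is at most $2$ times the cost of $(x,y)$.

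Combining these two steps, the final solution has cost at most $2 \cdot 3 \cdot \mathrm{OPT}_{CSIAP}$, and since any feasible unsplittable solution is also feasible for the splittable version we have $\mathrm{OPT}_{CSIAP} \leq \mathrm{OPT}_{CUIAP}$. Therefore the cost is bounded by $6 \cdot \mathrm{OPT}_{CUIAP}$, establishing the claimed $6$-approximation.

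Since both ingredients are already in place, there is essentially no genuine obstacle here; the only subtlety worth double-checking is the inequality $\mathrm{OPT}_{CSIAP} \leq \mathrm{OPT}_{CUIAP}$ (which is immediate because the splittable problem is a relaxation) and the fact that the conversion in Proposition~\ref{prop:splitToUnsplit} does not inflate holding costs, since the repacking never moves a delivery to a later day.
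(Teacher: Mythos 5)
Your proposal is correct and takes essentially the same route as the paper: it applies Proposition~\ref{prop:splitToUnsplit} to the $3$-approximation for Capacitated Splittable IAP (the paper's statement ``with the $2$-approximation'' is a typo for the $3$-approximation), yielding the factor $2\cdot 3 = 6$. Your extra remarks on $\mathrm{OPT}_{CSIAP} \leq \mathrm{OPT}_{CUIAP}$ and on holding costs being preserved by the repacking are accurate and consistent with the paper's argument.
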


In the Appendix, we show weak NP-hardness for the Capacitated Unsplittable IAP.

\section{Conclusion}
\label{sect:conclusion}
We studied the Uncapacitated, Capacitated Unsplittable, and Capacitated Splittable variants of IAP and SIRPFL. For the Uncapacitated IAP, a polynomial time dynamic program is known~\cite{WW58}. For the Capacitated Splittable IAP, we proved a $3$-approximation by rounding the LP. For the Capacitated Unsplittable IAP, we gave an NP-hardness reduction from Number Partition and a $6$-approximation. For the more general Uncapacitated Star Inventory Routing Problem with Facility Location (Uncapacitated SIRPFL), we gave a $12$-approximation by combining rounding ideas from Facility Location and the visitation ideas from our $3$-approximation for Capacitated Splittable IAP. For Capacitated Splittable SIRPFL, we provided at $24$-approximation. Following that, we have a $48$-approximation for Capacitated Unsplittable SIRPFL.
It remains open whether Capacitated Splittable IAP is NP-hard.
Since we tried to keep the proofs simple and did not optimize for the approximation factors, it may not be difficult to improve the factors.


%
%
%
%
\bibliographystyle{splncs04}
\bibliography{database_19_04_27}
%

\newpage
\appendix

\section{Capacitated Unsplittable IAP}
\label{sect:CUIAP}
To model more realistic scenarios, we now impose a capacity $U$ on the supply vehicle. The vehicle may make multiple trips in one day to meet the required demands. Also, we assume that demands are \emph{unsplittable}, i.e., each demand is within capacity and must be completely delivered in one trip.

\subsection{Capacitated Unsplittable IAP is Weakly NP-hard}
We show that Capacitated Unsplittable IAP is weakly NP-hard by reducing from Number Partitioning.

\begin{definition}
In \emph{Number Partitioning}, we are given a set $S$ of positive integers and wish to determine whether there is a subset $X \subset S$ such that $\sum_{a \in X} a = \sum_{b \in S \setminus X} b$.
\end{definition}

\begin{theorem}
Number Partitioning $\leq_P$ Capacitated Unsplittable IAP.
\end{theorem}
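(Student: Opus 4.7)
The plan is to give a Karp reduction from Number Partitioning. Given an instance $S = \{a_1, \ldots, a_n\}$ with $\sum_i a_i = 2B$ (if the total is odd, I output a trivial NO instance), I build a Capacitated Unsplittable IAP instance as follows: distance $W = 1$, time horizon $T = n$, vehicle capacity $U = B$, one demand point $(v, i)$ of size $d_i = a_i$ for each $i \in \{1, \ldots, n\}$, and all holding costs $h_{s,t} = 0$. This construction has size polynomial in the binary encoding of the Partition instance.

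I will then show that the Partition instance answers YES if and only if the constructed IAP instance has optimal cost at most $2$. The forward direction is constructive: given a partition $S = S_1 \sqcup S_2$ with each $S_j$ summing to $B$, dispatch exactly two trips on day $1$, one loaded with the demands in $S_1$ and the other with those in $S_2$. Each trip carries exactly $B \leq U$ units, every demand is delivered on or before its deadline, and with zero holding costs the total cost is $2W = 2$.

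For the converse, suppose the IAP optimum is at most $2$. Since $W = 1$ and all holding costs vanish, the cost equals the total number of trips, so the schedule uses at most two trips. But the total demand is $2B$ with per-trip capacity $U = B$, so at least $\lceil 2B / U \rceil = 2$ trips are required; thus exactly two trips are used, each carrying exactly $B$ units. Because demands are unsplittable, the two trip contents induce a partition of $\{a_1, \ldots, a_n\}$ into two subsets each summing to $B$, witnessing a YES answer for Partition.

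The main obstacle to rule out is an alternative cheap schedule—for instance, one that interleaves trips across different days or uses split-like tricks—achieving cost at most $2$ without exhibiting a partition. Zero holding costs together with a uniform per-trip cost $W$ collapse the objective to the raw trip count, reducing the feasibility question to bin-packing the items $\{a_1, \ldots, a_n\}$ into bins of capacity $B$; deadlines never bind, since every demand can be shipped on day $1$. The clean bin-packing lower bound of $2$, attained iff a partition exists, then yields weak NP-hardness, matching that of Number Partitioning.
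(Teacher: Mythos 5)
Your reduction is correct and is essentially the same as the paper's: numbers become unsplittable demands, capacity is set to half the total, and a cost threshold of two trips forces two exactly-full loads that witness the partition. The only cosmetic differences are that the paper places all demands on day 1 (making holding costs vacuous by construction) while you spread them over days with zero holding costs, and it allows an arbitrary positive depot--client distance instead of $W=1$.
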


\begin{proof}
Let $S$ be the set in a given instance of the Number Partitioning problem. We will create an instance $I$ of Capacitated Unsplittable IAP as follows. For each $a \in S$, create a demand point on day $1$ with $a$ units of demand. This means that serving all demands on the first day forms a valid optimal solution. (Note that we are not using the holding costs at all in this reduction.) Set the capacity of the vehicle to $U := \frac{\sum_{a \in S} a}{2}$. Let the distance between the depot and the client be any positive number $w_{rv}$. We will show that there is a solution of cost at most $2 w_{rv}$ to I if and only if $S$ has a valid number partitioning.

First, we prove the forward direction. Assume that there is a solution to $I$ of cost at most  $2 w_{rv}$. Since the total demand is $2U$, any solution must cost at least $2 w_{rv}$. Furthermore, the only way to obtain cost exactly $2 w_{rv}$ is to drop off exactly $U = \frac{\sum_{a \in S} a}{2}$ units of demand per trip in two trips total. Let $X$ be the set of numbers corresponding to the demands in the first trip. By definition of the trips, we have $\sum_{a \in X} a = U = \sum_{b \in S \setminus X} b$.

Second, we prove the backward direction. Assume that there is a number partitioning $X \subset S$. Then serving $X$ and $S \setminus X$ each in a trip on day $1$ forms a feasible solution since $\sum_{a \in X} a = U = \sum_{b \in S \setminus X} b$. The cost of each trip is $w_{rv}$, which yields a total cost of $2 w_{rv}$.
\end{proof}

\section{Capacitated Splittable SIRPFL}
\label{sect:CSSIRPFL}
In this section, we study SIRPFL with vehicle capacities. Formally, we have a vehicle starting at the depot with capacity $U$ that is allowed to make multiple trips per day, where the routing cost accounts for the multiplicity of trips. The satisfaction of each demand is allowed to be split among multiple trips. Using the same variables as the LP formulation for Uncapacitated SIRPFL, the Capacitated Splittable SIRPFL has the following LP relaxation.

\begin{alignat}{2}
\min \qquad \sum_{v \in V} f_v z_v + &\sum_{s \leq T} \sum_{e \in E} w_e y^e_s + && \sum_{(v,t) \in D} \sum_{s \leq t} H^v_{s,t} x^v_{s,t} \nonumber \\
\text{s.t.} \hspace{1.4cm} \sum_{s \leq t} x^v_{s,t} &\geq 1 && \forall (v,t) \in D \label{constraint:serviceCS}\\
\sum_{u \in V} y^{uv}_{st} &\geq x^v_{s,t} && \forall (v,t) \in D, s \leq t \label{constraint:connectionCS}\\
\sum_{u \in V} y^{uv}_s &\geq \sum_{t=s}^T \frac{d^v_t}{U} x^v_{s,t} && \forall v \in V, s \leq T \label{constraint:capConnectionCS}\\
z_u &\geq \sum_{s=1}^T y^{uv}_{st} &&\forall (v,t) \in D, u \in V \label{constraint:facilityLBCS}\\
y^{uv}_s &\geq y^{uv}_{st} &&\forall (v,t) \in D, u \in V, s \leq t \label{constraint:connectionLBCS}\\
y^{uv}_s &\geq \sum_{t = s}^T \frac{d^v_t}{U} y^{uv}_{st} && \forall u, v \in V, s \geq T \label{constraint:capConnectionsLBCS}\\
z_u \geq &y^{uv}_s &&\forall u,v \in V, s \leq T \label{constraint:facilityLB2CS}\\
z_u, y^e_r, y^a_{l,m} x^v_{s,t} &\geq 0 && \forall u, v \in V, e,a \in E, r, m, t \leq T, l \leq m, s \leq t \label{constraint:nonnegAllCS}.
\end{alignat}

Constraints~\ref{constraint:serviceCS},~\ref{constraint:connectionCS},~\ref{constraint:facilityLBCS},~\ref{constraint:connectionLBCS},~\ref{constraint:facilityLB2CS},~\ref{constraint:nonnegAll} are the same as in the LP for Uncapacitated SIRPFL. Constraint~\ref{constraint:capConnectionCS} requires that the number of trips to $v$ on day $s$ must be at least the total demand at $v$ that were served from day $s$ scaled by the capacity limit. Similarly, constraint~\ref{constraint:capConnectionsLBCS} requires that the number of trips from facility $u$ to client $v$ on day $s$ must be at least the total demand at $v$ served by $u$ from day $s$ scaled by the capacity limit.

Now, we round the LP. Let $(x,y,z)$ be an optimal LP solution and $f(x,y,z)$, $r(x,y,z)$, and $h(x,y,z)$ be the facility cost, routing cost, and holding cost of $(x,y,z)$, respectively. We use the same notation as Section~\ref{sect:CSIAP}. Let $s_{v,t}$ be the latest day $s^*$ such that $\sum_{v \in V} \sum_{s = s^*}^T y^{uv}_{st} \geq \frac 1 2$. For client $v$, we keep track of the set of demand days $t$ that $v$ will be visited exactly on their $s_{v,t}$ day. We use $S_v$ to keep track of all days of visits assigned for $v$. To determine the visit per client $v$, we apply the visit rule for Capacitated Splittable IAP to each $v$ independently.

\begin{algorithm}
\caption{Visits for $v$}
\label{algor:visitsCSSIRPFL}
\begin{algorithmic}[1]
\State Initialize $A_v \leftarrow \emptyset$.
\State Initialize $S_v \leftarrow \emptyset$.
\While{there is any unsatisfied demand}
	\State Denote by $t$ the unsatisfied demand day with the latest $s_{v,t}$	
	\State $A_v \leftarrow A_v \cup \{t\}$.
	\State $S_v \leftarrow S_v \cup \{s_{v,t}\}$.
	\State Satisfy $(v,t)$ by dropping off $d^v_t$ on day $s_{v,t}$.
	\For{unsatisfied demand day $\hat{t} \geq s_{v,t}$}
		\State satisfy $(v,\hat{t})$	by dropping off $d^v_{\hat{t}}$ on day $s_{v,t}$.
	\EndFor	
\EndWhile
\State Output the visit set $S_v$.
\end{algorithmic}
\end{algorithm}

We use $T^v_s$ to denote the set of demands $(v,t)$ who were assigned to be served on day $s$.

We keep the same definition of the balls as Section~\ref{sect:USIRPFL}, i.e., $W_{v,t} = \sum_{u \in V} \sum_{s = s_{v,t}}^t w_{uv} y^{uv}_{st}$; $W_v = \min_{t \in A_v} W_{v,t}$; $B_v = \{u \in V: w_{uv} \leq 4 W_v\}$; $F_v = \arg\min_{q \in B_v} f_q$. To determine the facility openings, we will apply the same procedure as Algorithm~\ref{algor:starsIRPFL}. For Capacitated Splittable SIRPFL, it will yield a $24$-approximation.

\begin{algorithm}
\caption{24-approximation for Capacitated Splittable SIRPFL}
\label{algor:CSSIRPFL}
\begin{algorithmic}[1]
\State $\mathcal{B} \leftarrow \emptyset$
\While{there is any ball $B_v$ disjoint from all balls in $\mathcal{B}$} 
\State Add to $\mathcal{B}$ the ball $B_{v_i}$ of smallest radius
\EndWhile
\State Within each ball $B_{v_i}$, open a facility at $F_{v_i}$.
\State Assign each client $v$ to the closest opened facility $u(v)$.
\State For each $v$, serve it on all days in $S_v$ by building multi-edges as needed to serve all demands assigned to each visit day $s \in S_v$ from facility $u(v)$ to $v$.
\end{algorithmic}
\end{algorithm}

\noindent Let $B_{v_1},\ldots,B_{v_l}$ be the balls picked into $\mathcal{B}$ by Algorithm~\ref{algor:CSSIRPFL}. Now we are ready to bound the costs of the solution from Algorithm~\ref{algor:CSSIRPFL}.

\begin{proposition}
The holding cost of the solution from Algorithm~\ref{algor:CSSIRPFL} is at most $2h(x,y,z)$.
\end{proposition}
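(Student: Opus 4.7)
The plan is to mirror the holding-cost analysis of the Capacitated Splittable IAP (the analogous proposition in Section~\ref{sect:CSIAP}), applied independently for each client $v$. For each demand point $(v,t)$, I will charge twice the disjoint LP budget $\sum_{s=1}^{s_{v,t}} H^v_{s,t} x^v_{s,t}$ to pay for the holding cost incurred by the algorithm. The argument splits into two cases depending on whether $t$ is an anchor for $v$.

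First, for the case $t \in A_v$: by Algorithm~\ref{algor:visitsCSSIRPFL}, the demand $(v,t)$ is served on day $s_{v,t}$ and therefore incurs holding cost $H^v_{s_{v,t},t}$. Using constraint~\ref{constraint:connectionCS} together with the definition of $s_{v,t}$, we get
\begin{equation*}
\sum_{s=1}^{s_{v,t}} x^v_{s,t} \;\geq\; 1 - \sum_{s=s_{v,t}+1}^{t}\sum_{u \in V} y^{uv}_{st} \;>\; \tfrac{1}{2}.
\end{equation*}
Combined with the monotonicity $H^v_{s,t} \geq H^v_{s_{v,t},t}$ for every $s \leq s_{v,t}$, this gives $\sum_{s=1}^{s_{v,t}} H^v_{s,t} x^v_{s,t} > H^v_{s_{v,t},t}/2$, so twice this budget covers the incurred cost.

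Second, for the case $t \notin A_v$: by the greedy selection rule in Algorithm~\ref{algor:visitsCSSIRPFL}, $t$ was satisfied inside the inner loop during the iteration of some anchor $\tilde{t}$, so $(v,t)$ is served on day $s_{v,\tilde{t}}$ with $s_{v,\tilde{t}} \leq t$. Since $\tilde{t}$ was chosen as the unsatisfied demand with the latest $s_{v,\cdot}$ value, we also have $s_{v,t} \leq s_{v,\tilde{t}}$. The holding cost incurred is $H^v_{s_{v,\tilde{t}},t}$, and by monotonicity of holding costs in the start day, $H^v_{s_{v,\tilde{t}},t} \leq H^v_{s_{v,t},t}$. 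Applying exactly the same bound as in Case 1, $\sum_{s=1}^{s_{v,t}} H^v_{s,t} x^v_{s,t} > H^v_{s_{v,t},t}/2$, and so twice the budget still pays for the cost.

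Finally, the LP term $\sum_{s=1}^{s_{v,t}} H^v_{s,t} x^v_{s,t}$ used for each demand $(v,t)$ is indexed by a distinct pair $(v,t)$, so the budgets across demand points are disjoint portions of $h(x,y,z)$. Summing over all $(v,t) \in D$ yields the claimed bound of $2h(x,y,z)$. The one subtlety to verify carefully is Case 2, namely that the anchor selection rule of Algorithm~\ref{algor:visitsCSSIRPFL} indeed forces the two inequalities $s_{v,\tilde{t}} \leq t$ and $s_{v,t} \leq s_{v,\tilde{t}}$ simultaneously; once these are in hand, the rest is a direct repetition of the single-client argument.
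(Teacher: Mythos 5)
Your proof is correct and follows essentially the same route as the paper's: charge twice the budget $\sum_{s=1}^{s_{v,t}} H^v_{s,t} x^v_{s,t}$ per demand point, handle anchors via the definition of $s_{v,t}$ with constraints~\ref{constraint:serviceCS} and~\ref{constraint:connectionCS}, and handle non-anchors via $s_{v,t} \leq s_{v,\tilde{t}} \leq t$ and monotonicity of holding costs. The only (immaterial) difference is that you identify the service day directly as $s_{v,\tilde{t}}$ of the triggering anchor, whereas the paper phrases it as the latest visit day in $S_v$ that is at most $t$; these coincide.
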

\begin{proof}
We apply the same method as bounding the holding cost in Capacitated Splittable IAP. As before, we shall charge $\sum_{s=1}^{s_{v,t}} H^v_{s,t} x^v_{s,t}$ per demand point $(v,t)$ to pay for its holding cost. We consider the case $t \in A_v$ separately from $t \notin A_v$.
\begin{enumerate}
\item
Assume that $t \in A_v$. Then $(v,t)$ is satisfied on $s_{v,t}$, which incurs a holding cost of  $H^v_{s_{v,t},t}$. Since $s_{v,t}$ was the latest day that accumulates $\frac 1 2$ value from $s_{v,t}$ up to $t$, it must be that $\sum_{u \in V} \sum_{s=s_{v,t}+1}^t y^{uv}_{st} < \frac 1 2$. So

\begin{align*}
\sum_{s=1}^{s_{v,t}} x^v_{s,t} &\geq 1 - \sum_{s = s_{v,t} + 1}^t x^v_{s,t} \text{ by constraint~\ref{constraint:serviceCS}}\\
&\geq 1 - \sum_{s = s_{v,t} + 1}^t \sum_{u \in V} y^{uv}_{st} \text{ by constraint~\ref{constraint:connectionCS}}\\
&> 1 - \frac 1 2 \text{ by definition of $s_{v,t}$}\\
&= \frac 1 2
\end{align*}

Hence, $\sum_{s=1}^{s_{v,t}} H^v_{s,t} x^v_{s,t} \geq \frac{H^v_{s_{v,t},t}}{2}$ is able to pay for half the holding cost for $(v,t)$.

\item
Assume that $t \notin A_v$. Let $\tilde{s}$ be the latest day in $S_v$ such that $\tilde{s} \leq t$. Then $(v,t)$ incurs a holding cost of $H^v_{\tilde{s},t}$. Since $s_{v,t}$ was not chosen into $S_v$, it must be that $\tilde{s} \geq s_{v,t}$ and $\tilde{s}$ was chosen instead. Then we have the following lower bound on the LP budget: $\sum_{s=1}^{s_{v,t}} H^v_{s,t} x^v_{s,t} \geq \frac{H^v_{s_{v,t},t}}{2} \geq \frac{H^v_{\tilde{s},t}}{2}$. The first inequality follows from the analysis of the first case, and the second from the monotonicity of the holding cost.
\end{enumerate}
\end{proof}

\begin{proposition}
The routing cost of the solution from Algorithm~\ref{algor:CSSIRPFL} is at most $24 r(x,y,z)$.
\end{proposition}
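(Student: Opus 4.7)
The plan is to bound the routing cost by combining the two-way trip decomposition from Proposition~\ref{prop:routingCostCSIAP} for Capacitated Splittable IAP with the ball-based charging of Proposition~\ref{prop:routingCostUSIRPFL} for Uncapacitated SIRPFL, applied per client. On every visit day $s_{v,\tilde{t}}$ the algorithm uses at most $\lceil \sum_{t\in T^v_{s_{v,\tilde{t}}}} d^v_t/U\rceil \le \sum_{t\in T^v_{s_{v,\tilde{t}}}} d^v_t/U + 1$ trips, each of cost $w(u(v),v)$, so the total routing cost splits into the \emph{anchor term} $\sum_v w(u(v),v)\,|A_v|$ and the \emph{capacity term} $\sum_v w(u(v),v)\sum_t d^v_t/U$. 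I intend to bound each by $12\,r(x,y,z)$.

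The anchor term is handled by replaying the proof of Proposition~\ref{prop:routingCostUSIRPFL} verbatim: constraint~\ref{constraint:connectionLBCS}, $y^{uv}_s\ge y^{uv}_{st}$, is still present in the capacitated LP, and the anchors produced by Algorithm~\ref{algor:visitsCSSIRPFL} still yield pairwise disjoint intervals $[s_{v,\tilde{t}},\tilde{t}]$ (the new anchor is always chosen with the largest currently surviving $s$-value, which lies to the left of all previously taken intervals). Hence for each anchor $\tilde{t}\in A_v$ we obtain $w(u(v),v)\le 12 W_v\le 12 W_{v,\tilde{t}}\le 12\sum_u\sum_{s=s_{v,\tilde{t}}}^{\tilde{t}} w_{uv}\, y^{uv}_s$, and summing over anchors and clients yields exactly $12\,r(x,y,z)$.

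For the capacity term I plan to use constraint~\ref{constraint:capConnectionsLBCS}, $y^{uv}_s\ge \sum_{t\ge s}(d^v_t/U)\,y^{uv}_{st}$, as the capacitated analog of the step $\sum_s y_s\ge \sum_t d_t/U$ used in Proposition~\ref{prop:routingCostCSIAP}. Multiplying by $w_{uv}$ and summing over $u$ and over $s$ in each anchor interval, the target inequality to establish is
\[
\sum_u\sum_{s=s_{v,\tilde{t}}}^{\tilde{t}} w_{uv}\, y^{uv}_s \;\ge\; W_v\sum_{t\in T^v_{s_{v,\tilde{t}}}}\frac{d^v_t}{U}.
\]
Given this, the disjointness of the anchor intervals collapses the left-hand side into a quantity bounded by the per-client LP routing cost, and multiplying by $w(u(v),v)\le 12 W_v$ supplies the factor of $12$, yielding the desired $12\,r(x,y,z)$ bound. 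Adding the anchor and capacity bounds produces $24\,r(x,y,z)$.

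The main obstacle is establishing the displayed inequality for demands $t\neq\tilde{t}$ in $T^v_{s_{v,\tilde{t}}}$; the $t=\tilde{t}$ contribution is immediate from $W_{v,\tilde{t}}\ge W_v$ and a direct application of constraint~\ref{constraint:capConnectionsLBCS} with $t'=\tilde{t}$. For the non-anchor demands served by $\tilde{t}$ one only knows $s_{v,t}\le s_{v,\tilde{t}}$ (since $\tilde{t}$ was the unsatisfied demand with the latest $s$-value), so the mass $\sum_u y^{uv}_{st}$ lying inside the anchor interval need not capture the full $1/2$ guaranteed by the definition of $s_{v,t}$. To bridge this gap I would combine the definitional bound $\sum_u\sum_{s=s_{v,t}}^t y^{uv}_{st}\ge 1/2$ with the concentration bound $\sum_{u\in B_v} z_u\ge 1/4$ from Lemma~\ref{lem:concentration}, and argue that enough of the $W_{v,t}$ mass must survive inside the anchor interval to avoid contradicting the minimality of $W_v$ among anchor radii; this is the technical step I expect to carry all the weight of the proof.
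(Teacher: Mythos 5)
Your overall accounting (per-visit trips split into a capacity term and a ``$+1$ per anchor'' term, each to be paid by $12$ copies of the routing budget) and your treatment of the anchor term match the paper. The genuine gap is the displayed inequality $\sum_u\sum_{s=s_{v,\tilde t}}^{\tilde t} w_{uv}\,y^{uv}_s \ge W_v\sum_{t\in T^v_{s_{v,\tilde t}}} d^v_t/U$, which you leave unproven: it does not follow from the LP constraints and the algorithm's structure, and it can fail. The very reason a non-anchor $t\in T^v_{s_{v,\tilde t}}$ exists is that $s_{v,t}\le s_{v,\tilde t}$, i.e., the LP may place the mass $y^{uv}_{st}$ serving $t$ almost entirely on days before the anchor interval, in which case constraints~\ref{constraint:connectionLBCS} and~\ref{constraint:capConnectionsLBCS} force essentially no trip mass inside $[s_{v,\tilde t},\tilde t]$ on behalf of $t$. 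Concretely, take a single facility $u$ at distance $W$, demands $d^v_{16}=d^v_{20}=U$, and a feasible LP solution with $y^{uv}_{2,16}=1$ and $y^{uv}_{1,20}=y^{uv}_{15,20}=\tfrac12$: then $s_{v,16}=2$, $s_{v,20}=15$, day $15$ is the only visit, $T^v_{15}=\{16,20\}$, $A_v=\{20\}$, $W_v=W/2$, and the left side can be as small as $W/2$ while the right side is $W$ (nothing in your argument invokes optimality of the LP solution, so such feasible configurations must be covered). Your proposed bridge cannot close this: Lemma~\ref{lem:concentration} lower-bounds the facility mass $\sum_{u\in B_{v_i}}z_u$ in the picked balls and says nothing about where in \emph{time} the trip variables $y^{uv}_s$ sit, and minimality of $W_v$ over anchors places no lower bound on $W_{v,t}$ for non-anchors (the LP may serve $t$ from a facility far closer than $4W_v$, making $W_{v,t}$ arbitrarily small).

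The fix is that no time-localization is needed for the capacity term; interval disjointness is only required for the ``$+1$'' term, which you already handle. The paper charges the capacity term against one full copy of the per-client routing budget: bound $W_v\le W_{v,t}$, expand $W_{v,t}=\sum_u\sum_{s=s_{v,t}}^{t}w_{uv}y^{uv}_{st}$ using each demand's \emph{own} interval, sum over all demands (each $(v,t)$ belongs to exactly one $T^v_s$, and the variables $y^{uv}_{st}$ are distinct across $t$, so nothing is double counted), and exchange the order of summation to get $\sum_u w_{uv}\sum_{s\le T}\sum_{t\ge s}\tfrac{d^v_t}{U}\,y^{uv}_{st}\le\sum_u\sum_{s\le T}w_{uv}\,y^{uv}_s$ by applying constraint~\ref{constraint:capConnectionsLBCS} day by day; multiplying by the trip-length bound $12W_v\ge w(u(v),v)$ gives the capacity term within $12\,r(x,y,z)$, and adding your anchor-term bound yields $24\,r(x,y,z)$. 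One caveat if you follow this route: the step $W_v\le W_{v,t}$ must hold for non-anchor demands too, which requires reading $W_v$ as $\min_t W_{v,t}$ over all demand days of $v$ (consistent with $\hat t=\arg\min_t W_{v,t}$ in the proof of Lemma~\ref{lem:concentration}) rather than only over anchors; with that reading the anchor-term bound and the concentration lemma go through unchanged.
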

\begin{proof}
We will combine the proof methods for bounding the routing cost in Uncapacitated SIRPFL and Capacitated Splittable IAP. Fix $v \in V$. By Proposition~\ref{prop:routingCostUSIRPFL}, the length of a trip from $v$ to its nearest opened facility is $4 W_v$ if $v \in \{v_1,\ldots,v_l\}$ and $12 W_v$ otherwise. For each visit day $s \in S_v$, the number of trips needed on that day is $\left \lceil \sum_{t \in T^v_{s}} \frac{d^v_t}{U} \right \rceil \leq \sum_{t \in T^v_{s}} \frac{d^v_t}{U} + 1$. Then the total number of trips to $v$ is at most $\sum_{\tilde{t} \in A_v} \sum_{t \in T^v_{s_{v,\tilde{t}}}} \left(\frac{d^v_t}{U} + 1 \right) = \sum_{\tilde{t} \in A_v} \sum_{t \in T^v_{s_{v,\tilde{t}}}} \frac{d^v_t}{U} + |A_v|$. So the total routing cost for $v$ is at most $12 \sum_{\tilde{t} \in A_v} \sum_{t \in T^v_{s_{v,\tilde{t}}}} \frac{d^v_t}{U} W_v + 12 |A_v| W_v$. As in the the analysis of Proposition~\ref{prop:routingCostCSIAP}, we pay for the first term separately from the second term.

Now, we show that $12$ copies of the LP budget $\sum_{u \in V} \sum_{s=1}^T w_{uv} y^{uv}_{s}$ for $v$ suffices to pay for the first term.


\begin{align*}
12 \sum_{\tilde{t} \in A_v} \sum_{t \in T^v_{s_{v,\tilde{t}}}} \frac{d^v_t}{U} W_v &\leq 12 \sum_{\tilde{t} \in A_v} \sum_{t \in T^v_{s_{v,\tilde{t}}}} \frac{d^v_t}{U} W_{v,t}\\
&= 12 \sum_{\tilde{t} \in A_v} \sum_{t \in T^v_{s_{v,\tilde{t}}}} \frac{d^v_t}{U} \sum_{u \in V} \sum_{s = s_{v,t}}^t w_{uv} y^{uv}_{st}\\
&\leq 12 \sum_{t = 1}^T \frac{d^v_t}{U} \sum_{u \in V} \sum_{s=s_{v,t}}^t w_{uv} y^{uv}_{st} \text{ since each $(v,t)$ is assigned to some anchor}\\
&= 12 \sum_{u \in V} w_{uv} \sum_{t = 1}^T \sum_{s=s_{v,t}}^t \frac{d^v_t}{U} y^{uv}_{st}\\
&\leq 12 \sum_{u \in V} w_{uv} \sum_{t = 1}^T \sum_{s=1}^t \frac{d^v_t}{U} y^{uv}_{st}\\
&= 12 \sum_{u \in V} w_{uv} \sum_{s=1}^T \sum_{t=s}^T \frac{d^v_t}{U} y^{uv}_{st}\\
&\leq 12 \sum_{u \in V} w_{uv} \sum_{s=1}^T y^{uv}_s \text{ by constraint~\ref{constraint:capConnectionsLBCS}}.
\end{align*}

Next, we will pay for the second term by charging $12 \sum_{u \in V} \sum_{s=s_{v,\tilde{t}}}^{\tilde{t}} w_{uv} y^{uv}_{s}$ per anchor $(v,\tilde{t})$. The cost for each anchor is
\begin{align*}
12 W_v &\leq 12 W_{v,\tilde{t}}\\
&= 12 \sum_{u \in V} w_{uv} \sum_{s=s_{v,\tilde{t}}}^{\tilde{t}} y^{uv}_{s \tilde{t}}\\
&\leq 12 \sum_{u \in V} w_{uv} \sum_{s=s_{v,\tilde{t}}}^{\tilde{t}} y^{uv}_{s} \text{ by constraint~\ref{constraint:connectionLBCS}}\\
&= 12 \sum_{u \in V} \sum_{s=s_{v,\tilde{t}}}^{\tilde{t}} w_{uv} y^{uv}_{s}.
\end{align*}

Since the intervals among the anchors in $A_v$ are disjoint, only one copy of $\sum_{u \in V} \sum_{s=1}^T w_{uv} y^{uv}_{s}$ is used by $\sum_{u \in V} \sum_{\tilde{t} \in A_v} \sum_{s={s_{v,\tilde{t}}}}^{\tilde{t}} y^{uv}_s$. In total, we used $24$ copies of the routing cost of the LP budget to pay for all visits.
\end{proof}

\begin{proposition}
The facility cost of the solution from Algorithm~\ref{algor:CSSIRPFL} is at most $4 f(x,y,z)$.
\end{proposition}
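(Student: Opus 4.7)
The plan is to mirror exactly the proof of the analogous bound in the Uncapacitated SIRPFL case (the proposition preceding Algorithm~\ref{algor:starsIRPFL}'s final theorem). The key fact I will use is that Algorithm~\ref{algor:CSSIRPFL} picks a disjoint collection $B_{v_1},\ldots,B_{v_l}$ of balls (by construction of the greedy ball-selection loop), and opens only one facility $F_{v_i}$ per ball. Consequently, to charge the facility-opening cost it suffices to bound the cost of each opened $F_{v_i}$ by a constant multiple of $\sum_{v\in B_{v_i}} z_v f_v$, and then sum up over the disjoint balls.

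The central step is to re-establish the concentration statement \emph{$\sum_{v\in B_{v_i}} z_v \geq \tfrac{1}{4}$ for every $i$}, i.e.\ the direct analog of Lemma~\ref{lem:concentration}. The definitions of $s_{v,t}$, $W_{v,t}$, $W_v$, $B_v$ and $F_v$ used in Algorithm~\ref{algor:CSSIRPFL} are exactly the same as in the uncapacitated setting, and the only two LP inequalities invoked in the proof of Lemma~\ref{lem:concentration} are (i) the defining property of $s_{v,t}$ (total $y^{uv}_{s\hat t}$ mass accumulated from $s_{v,\hat t}$ to $\hat t$ is at least $\tfrac12$) and (ii) the facility lower bound $z_u \geq \sum_s y^{uv}_{st}$. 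In the capacitated LP these are constraints on $y^{uv}_{st}$ and constraint~\ref{constraint:facilityLBCS}, both of which have the same form as in the uncapacitated LP. So the same chain of inequalities, starting from $W_v = W_{v,\hat t}$ with $\hat t = \arg\min_t W_{v,t}$, splitting the sum into $u\in B_{v_i}$ and $u\notin B_{v_i}$, using $w_{uv} > 4W_v$ outside the ball, and then applying constraint~\ref{constraint:facilityLBCS}, leads to the same contradiction $W_v > W_v$ under the assumption $\sum_{u\in B_{v_i}} z_u < \tfrac14$.

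Once the concentration bound is in hand, I finish by the identical charging argument. Fix an opened facility $F_{v_i}$, which by definition is the cheapest vertex inside $B_{v_i}$. Then
\begin{equation*}
f_{F_{v_i}} \;\leq\; 4 \sum_{v\in B_{v_i}} z_v \, f_{F_{v_i}} \;\leq\; 4 \sum_{v\in B_{v_i}} z_v \, f_v,
\end{equation*}
where the first inequality uses the concentration bound and the second uses the minimality of $f_{F_{v_i}}$ over $B_{v_i}$. Summing over all picked balls $i=1,\ldots,l$ and using their disjointness,
\begin{equation*}
\sum_{i=1}^l f_{F_{v_i}} \;\leq\; 4 \sum_{i=1}^l \sum_{v\in B_{v_i}} z_v f_v \;\leq\; 4 \sum_{v\in V} z_v f_v \;=\; 4 f(x,y,z),
\end{equation*}
which is the desired bound.

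The main obstacle I anticipate is nothing more than verifying that the capacitated LP genuinely retains the two properties the uncapacitated proof of the concentration lemma depends on; since constraint~\ref{constraint:facilityLBCS} and the definition of $s_{v,t}$ are unchanged, the proof is essentially a transcription of the uncapacitated case, and no new ideas are required. The capacity-specific constraints~\ref{constraint:capConnectionCS} and~\ref{constraint:capConnectionsLBCS} do not enter the facility-cost argument at all (they were needed for bounding the routing cost).
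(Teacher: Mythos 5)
Your proposal is correct and follows essentially the same route as the paper: the paper also re-runs the concentration argument (the analog of Lemma~\ref{lem:concentration}) verbatim using constraint~\ref{constraint:facilityLBCS} and the definition of $s_{v,t}$, and then charges each opened facility $F_{v_i}$ to $4\sum_{v\in B_{v_i}} z_v f_v$ over the disjoint balls. Your observation that the capacity constraints play no role in the facility-cost bound matches the paper's treatment.
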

\begin{proof}
We apply the proof for bounding the facility cost in Uncapacitated SIRPFL. The details are provided here for completeness. As we saw in Section~\ref{sect:USIRPFL}, it suffices to show that $\sum_{v \in B_i} z_v \geq \frac 1 4$ for all $i \in \{1,\ldots,l\}$.

Suppose for contradiction that there is some $i \in \{1,\ldots,l\}$ such that $\sum_{v \in B_i} z_v < \frac 1 4$. Define $\hat{t} = \arg \min_t W_{v,t}$. Then

\begin{align*}
W_{v_i} &= W_{v,\hat{t}}\\
&=\sum_{u \in V} \sum_{s=s_{v,\hat{t}}}^{\hat{t}} w_{uv} y^{uv}_{s \hat{t}}\\ 
&\geq \sum_{u \notin B_{v_i}} \sum_{s=s_{v,\hat{t}}}^{\hat{t}} w_{uv} y^{uv}_{s \hat{t}}\\
&\geq 4 W_v \sum_{u \notin B_{v_i}} \sum_{s=s_{v,\hat{t}}}^{\hat{t}} y^{uv}_{s \hat{t}} \text{ by $u \notin B_{v_i}$}\\
&\geq 4 W_v (\sum_{u \in V} \sum_{s=s_{v,\hat{t}}}^{\hat{t}} y^{uv}_{s \hat{t}} - \sum_{u \in B_{v_i}} \sum_{s=s_{v,\hat{t}}}^{\hat{t}} y^{uv}_{s \hat{t}})\\
&\geq 4 W_v (\frac 1 2 - \sum_{u \in B_{v_i}} \sum_{s=s_{v,\hat{t}}}^{\hat{t}} y^{uv}_{s \hat{t}}) \text{ by definition of $s_{v,t}$}\\
&\geq 4 W_v (\frac 1 2 - \sum_{u \in B_{v_i}} z_u) \text{ by constraint~\ref{constraint:facilityLBCS}}\\
&> W_v \text{ by the supposition that $\sum_{u \in B_{v_i}} z_u < \frac 1 4$, which leads to a contradiction}.
\end{align*}

\end{proof}

Putting together all of the bounds, we obtain the following result.

\begin{theorem}
Algorithm~\ref{algor:CSSIRPFL} is a $24$-approximation for Capacitated Splittable SIRPFL.
\end{theorem}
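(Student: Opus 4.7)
The plan is to assemble the theorem directly from the three cost-component bounds proved in the propositions immediately preceding it. Let $(x,y,z)$ denote an optimal solution to the Capacitated Splittable SIRPFL LP relaxation, and write its objective value as $\mathrm{LP} = f(x,y,z) + r(x,y,z) + h(x,y,z)$. Since the LP is a relaxation of the integer problem, $\mathrm{LP} \leq \mathrm{OPT}$, so it suffices to bound the cost of the solution output by Algorithm~\ref{algor:CSSIRPFL} by a constant multiple of $\mathrm{LP}$.

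The proof proceeds by adding the three previous propositions. The facility cost paid by the algorithm is at most $4f(x,y,z)$; this comes from the disjoint-ball packing argument, which in turn hinges on Lemma~\ref{lem:concentration} applied in the capacitated setting (using Constraint~\ref{constraint:facilityLBCS} in place of Constraint~\ref{constraint:facilityLB}). The routing cost paid is at most $24r(x,y,z)$, split into two charges: one group of $12$ copies of $\sum_u w_{uv} y^{uv}_s$ pays for the ``demand-over-$U$'' trip term via Constraint~\ref{constraint:capConnectionsLBCS}, and a second group of $12$ copies pays for the ``$+1$ per anchor'' rounding term using disjoint $y$-intervals $[s_{v,\tilde{t}},\tilde{t}]$ over anchors, exactly as in Proposition~\ref{prop:routingCostCSIAP} combined with the star-routing detour bound from Proposition~\ref{prop:routingCostUSIRPFL}. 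The holding cost paid is at most $2h(x,y,z)$ by the anchor analysis together with the LP constraints $\sum_s x^v_{s,t} \geq 1$ and the monotonicity of holding costs. Summing these bounds,
\begin{align*}
\mathrm{ALG} \;\leq\; 4f(x,y,z) + 24r(x,y,z) + 2h(x,y,z) \;\leq\; 24\bigl(f(x,y,z) + r(x,y,z) + h(x,y,z)\bigr) \;=\; 24 \cdot \mathrm{LP} \;\leq\; 24 \cdot \mathrm{OPT},
\end{align*}
which yields the claimed $24$-approximation.

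The only genuine work lies in verifying that the three propositions can indeed be stated simultaneously for the \emph{same} LP solution $(x,y,z)$ and the \emph{same} output of Algorithm~\ref{algor:CSSIRPFL}: the anchor set $A_v$ and visit set $S_v$ used in bounding the holding cost must be precisely the ones produced by Algorithm~\ref{algor:visitsCSSIRPFL}, and the balls $B_{v_i}$ picked for the facility argument must use the radii $W_v$ derived from those same anchors. Since the propositions are proved exactly in this setting, no further adjustment is needed. The main conceptual obstacle was already overcome in the routing proposition, where one must simultaneously pay for the capacity-induced multiplicity of trips and for the star-rounding detour factor of $12$ without double-counting any $y$-variable; once that bound is in hand, the final theorem is a routine summation.
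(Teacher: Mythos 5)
Your proof is correct and matches the paper's argument: the theorem is obtained exactly by summing the three preceding propositions (facility cost at most $4f(x,y,z)$, routing cost at most $24r(x,y,z)$, holding cost at most $2h(x,y,z)$) for the same LP optimum and algorithm output, and taking the largest factor $24$ against the LP lower bound on OPT. No gaps; the paper's own proof is precisely this summation.
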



\section{Capacitated Unsplittable SIRPFL}
\label{sect:CUSIRPFL}
Here, we assume that each demand $d^v_t$ does not exceed $U$ and that $d^v_t$ is \emph{unsplittable}, which means that $d^v_t$ must all be delivered in one trip. Applying the conversion of splittable to unsplittable solution in Proposition~\ref{prop:splitToUnsplit} for each $v \in V$ yields a $2\alpha_{CSSIRPFL}$-approximation for Capacitated Unsplittable SIRPFL, where $\alpha_{CSSIRPFL}$ is the best approximation factor for Capacitated Splittable SIRPFL.

\begin{corollary}
\label{cor:splitToUnsplit}
There is a $2\alpha_{CSSIRPFL}$-approximation for Capacitated Unsplittable SIRPFL.
\end{corollary}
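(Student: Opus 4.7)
The plan is to reduce the unsplittable case to the splittable case by running the splittable approximation and then repacking each client's deliveries on each day, mirroring the argument in Proposition~\ref{prop:splitToUnsplit} but now applied client-by-client within a SIRPFL solution. First, I would observe that any feasible solution to the Capacitated Unsplittable SIRPFL instance is also feasible for the corresponding Capacitated Splittable SIRPFL instance with the same cost, so the optimal splittable cost is a lower bound on the optimal unsplittable cost. Run the $\alpha_{CSSIRPFL}$-approximation on the splittable instance to obtain a solution consisting of an opened facility set, an assignment of each client $v$ to some opened facility $u(v)$, a set of visit days $S_v$ per client, and for each $s \in S_v$ a (possibly fractional across demands) split of the demands served that day via multiple trips from $u(v)$ to $v$.

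Next, I would keep the facility openings and client-facility assignments unchanged, as well as the visit-day structure, so that both the facility cost and the holding cost of the new solution exactly match those of the splittable solution. The only modification is in the routing: for each client $v$ and each visit day $s \in S_v$, let $D^{v,s}$ be the set of demands originally assigned to be served at $v$ on day $s$. I would invoke exactly the repacking step of Proposition~\ref{prop:splitToUnsplit} on $D^{v,s}$: give every demand of size exceeding $U/2$ its own trip, and greedily pack demands of size at most $U/2$ into trips of capacity $U$ without splitting any demand. The same argument as in Proposition~\ref{prop:splitToUnsplit} shows that the resulting number of unsplittable trips $n'(v,s)$ satisfies $n'(v,s) \le 2\, n(v,s)$, where $n(v,s)$ is the number of trips used by the splittable solution on day $s$ at client $v$, since all but at most one of the unsplittable trips must be more than half full.

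Summing over all clients and all days, the total routing cost at most doubles while facility and holding costs are preserved. Hence the unsplittable solution costs at most twice the splittable solution, which is at most $2\alpha_{CSSIRPFL}$ times the optimal splittable cost, which in turn is at most $2\alpha_{CSSIRPFL}$ times the optimal unsplittable cost. The main (and really only) subtlety is that the repacking must be performed per client and per day independently, so that one does not accidentally reassign demands to different visit days (which would affect the holding cost) or change which facility serves a client (which would affect the routing length factor $w_{u(v),v}$ per trip); keeping these fixed is what makes the factor-$2$ blow-up apply cleanly to the routing cost alone. Plugging in $\alpha_{CSSIRPFL}=24$ from the previous section yields the claimed $48$-approximation for Capacitated Unsplittable SIRPFL.
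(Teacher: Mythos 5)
Your proposal is correct and follows essentially the same route as the paper: the paper's proof of this corollary simply applies the repacking of Proposition~\ref{prop:splitToUnsplit} independently to each client $v$ (keeping facilities, assignments, and visit days fixed), exactly as you do, so that only the per-day trip count---and hence the routing cost---doubles while holding and facility costs are unchanged. Your additional remarks (the splittable optimum lower-bounding the unsplittable optimum, and the need to repack per client per day) are the same implicit observations the paper relies on, just made explicit.
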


Since Capacitated Splittable SIRPFL has a $24$-approximation, we obtain the following result applying Corollary~\ref{cor:splitToUnsplit}.

\begin{theorem}
Capacitated Unsplittable IRPFL has a $48$-approximation.
\end{theorem}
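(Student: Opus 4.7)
The plan is to reduce Capacitated Unsplittable SIRPFL to its splittable counterpart by a per-client repacking argument, exactly mirroring the passage from Capacitated Splittable IAP to Capacitated Unsplittable IAP carried out in Proposition~\ref{prop:splitToUnsplit}. Concretely, given any instance of Capacitated Unsplittable SIRPFL, I would feed it as input to the $24$-approximation for Capacitated Splittable SIRPFL (Algorithm~\ref{algor:CSSIRPFL}), obtaining a splittable solution consisting of opened facilities, client-to-facility assignments, a set of visit days $S_v$ per client $v$, and a splittable assignment of demands to those visits.

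Next, I would leave the facility openings, the client-to-facility assignments, and the set $S_v$ of visit days untouched, and only modify the packing of demands into trips on each visit day. Fix a client $v$ and a day $s \in S_v$; let $D^{v,s}$ be the set of demand points $(v,t)$ that the splittable solution serves on day $s$. Split $D^{v,s}$ into $D^{v,s}_{\le 1/2}=\{(v,t):d^v_t\le U/2\}$ and $D^{v,s}_{>1/2}=D^{v,s}\setminus D^{v,s}_{\le 1/2}$. Put each demand of $D^{v,s}_{>1/2}$ on its own trip, then greedily fill in the demands of $D^{v,s}_{\le 1/2}$ whole, opening a new trip only when the next item does not fit. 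By the argument of Proposition~\ref{prop:splitToUnsplit}, every trip except possibly one is filled to strictly more than $U/2$, so the new number of trips $n'(v,s)$ satisfies $n'(v,s)\le 2n(v,s)$, where $n(v,s)$ is the number of trips used by the splittable solution on day $s$ for client $v$.

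Finally, I would bound the costs. The facility cost and holding cost are unchanged, since we keep the same opened facilities and the same day on which each demand is delivered. The routing cost for client $v$ on day $s$ is $n'(v,s)\cdot w_{u(v)v}\le 2n(v,s)\cdot w_{u(v)v}$, so the total routing cost at most doubles. Summing over all clients and days gives total cost at most $2$ times the cost of the splittable solution, which is at most $2\cdot 24=48$ times the optimum of Capacitated Splittable SIRPFL. Since any feasible Capacitated Unsplittable solution is also feasible for Capacitated Splittable SIRPFL, the latter optimum is a lower bound on the former, yielding the claimed $48$-approximation via Corollary~\ref{cor:splitToUnsplit}.

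I do not expect any serious obstacle: the entire argument is mechanical once Proposition~\ref{prop:splitToUnsplit} and Algorithm~\ref{algor:CSSIRPFL} are in hand. The one subtlety worth pointing out explicitly is that the repacking is performed independently per client $v$ on each of its own visit days, which is legitimate precisely because the star structure means trips for different clients use disjoint edges and do not interact with each other's capacities.
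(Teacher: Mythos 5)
Your proposal is correct and follows essentially the same route as the paper: apply the $24$-approximation for Capacitated Splittable SIRPFL, then invoke the per-client trip repacking of Proposition~\ref{prop:splitToUnsplit} (as in Corollary~\ref{cor:splitToUnsplit}) to lose only a factor of $2$ on the routing cost while facility and holding costs are unchanged. Your explicit note that the splittable optimum lower-bounds the unsplittable optimum, and that per-client repacking is legitimate because of the star structure, matches the (implicit) reasoning in the paper.
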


\end{document}